\newcommand*\diff{\mathop{}\!\mathrm{d}}
\newtheorem{mydef}{Definition}
\newtheorem{problem}{Problem}
\newtheorem{prop}{Proposition}
\newtheorem{rem}{Remark}
\newtheorem{lemma}{Lemma}
\newtheorem{theorem}{Theorem}
\begin{document}
\title{A Constructive Method to Maximize Entropy under Marginal Constraints}
\author{Pierre Bertrand\\
\textit{Aix Marseille Univ, CNRS, AMSE, Marseille, France
\thanks{The project leading to this publication has received funding from the French government under the “France 2030” investment plan managed by the French National Research Agency (reference :ANR-17-EURE-0020) and from Excellence Initiative of Aix-Marseille University - A*MIDEX.}}}
\date{\today}
\maketitle

\abstract{We study the problem of maximizing Rényi entropy of order $2$ (equivalently, minimizing the index of coincidence) over the set of joint distributions with prescribed marginals.
A closed-form optimizer is known under a feasibility condition on the marginals; we show that this condition is highly restrictive.
We then provide an explicit construction of an optimal coupling for arbitrary marginals.
Our approach characterizes the optimizer's structure and yields an iterative algorithm that terminates in finite time, returning an exact solution after at most $p-1$ updates, where $p$ is the number of rows. 

\textbf{Keywords: }\textit{Entropy maximization; Index of coincidence minimization; Coupling; Marginal constraints}}
\tableofcontents

\section{Introduction\label{sec:intro}}

Let $X$ and $Y$ be discrete random variables taking values in $\{1,\dots,p\}$ and $\{1,\dots,q\}$, with prescribed marginals $\mu\in S_p$ and $\nu\in S_q$. We study the coupling problem of constructing $\pi\in\mathcal{S}_{\mu,\nu}$ that maximizes dispersion while preserving the marginals; such marginal-constrained coupling problems are classical in information theory and probability~\cite{FRE51,strassen1965existence}. Here
\[
\pi\in\mathcal{S}_{\mu,\nu}
\iff
\left\{
    \begin{array}{ll}
        \forall~1\le u \le p, &\pi_{u,\cdot} := \sum_{v=1}^q \pi_{u,v} = \mu_u, \\
        \forall~1\le v \le q, &\pi_{\cdot,v} := \sum_{u=1}^p \pi_{u,v} = \nu_v, \\
		\forall~1\le u \le p, \forall~1\le v \le q, &\pi_{u,v} \geq 0, \\
        &\pi_{\cdot,\cdot} := \sum_{u=1}^p \sum_{v=1}^q \pi_{u,v} = 1.
    \end{array}
\right.
\]
For $\alpha>0$, $\alpha\neq 1$, the Rényi entropy of order $\alpha$ is~\cite{renyi1961measures}
\begin{equation}\label{eq:entropy_renyi}
H_{\alpha}(\pi) = \frac{1}{1-\alpha}\log\left(\sum_{u=1}^p\sum_{v=1}^q \pi_{u,v}^{\alpha}\right).
\end{equation}
We focus on $\alpha=2$, for which maximizing $H_2$ is equivalent to minimizing the \emph{index of coincidence}~\cite{friedman1987index}
\begin{equation}\label{eq:IC}
IC(\pi) := \sum_{u=1}^p\sum_{v=1}^q \pi_{u,v}^2,
\end{equation}
via $2H_2(\pi)=-\log(IC(\pi))$. This objective arises, for instance, in cryptanalysis and related security questions~\cite{friedman1987index,handbookcrypto}. Our problem is therefore
\begin{problem}[Maximum Rényi entropy / minimum index of coincidence]
\label{pb:max_entropy}
\[
\max_{\pi\in\mathcal{S}_{\mu,\nu}} H_{2}(\pi)
\;\equiv\;
\min_{\pi\in\mathcal{S}_{\mu,\nu}} IC(\pi).
\]
\end{problem}

Related work predominantly addresses the opposite direction (minimum-entropy couplings); see, e.g.,~\cite{cicalese2019minimum} and algorithmic approaches such as~\cite{li2021efficient,kocaoglu2020applications}.

When Shannon entropy is used instead of $H_2$, the maximum-entropy solution is independence~\cite{kullback_contingency_1968}. For $H_2$, a closed-form optimizer is known only under the feasibility condition
\begin{equation}
\label{cond:H}
p\mu_1 + q\nu_1 \geq 1,
\end{equation}
under which an additive coupling is feasible and optimal (Proposition~\ref{prop:solution_pi_plus}; see also~\cite{pougaza2010maximum} for a related continuous construction). We show in Section~\ref{sec:margins} that this condition is rarely satisfied for generic marginals, which motivates an explicit solution in the general case.

\paragraph{Contributions.}
We provide an explicit construction of an optimal coupling for Problem~\ref{pb:max_entropy} without additional assumptions on the marginals. In particular, we (i) characterize the optimizer’s structure (including rectangular zero blocks), (ii) derive a transformation that restores feasibility starting from the known partial solution, and (iii) iterate it into an algorithm that terminates after at most $p-1$ steps.

\paragraph{Organization.}
Section~\ref{sec:solution_partielle} recalls the additive partial solution and its optimality under Equation~\eqref{cond:H}. Sections~\ref{sec:hyp_rectangle} and~\ref{sec:solution_generale} develop the transformations and the general algorithm, and Section~\ref{sec:margins} quantifies how restrictive Condition~\eqref{cond:H} is.

\section{A partial solution with an additive structure\label{sec:solution_partielle}}

Throughout the remainder of the paper, and in order to simplify notation, we assume that the indexing follows the increasing order of the marginals and that none of them is zero:

\begin{equation}\label{eq:ordre_marges}
0 < \mu_1 \le \mu_2 \le \ldots \le \mu_p
\quad \& \quad
0 < \nu_1 \le \nu_2 \le \ldots \le \nu_q.
\end{equation}

As is standard in optimization problems, we begin by establishing the existence of a solution; this result follows straightforwardly from the proposition below.

\begin{prop}[Existence of a solution]
The problem~\ref{pb:max_entropy} admits a solution, which we denote by $\pi^*$.
\end{prop}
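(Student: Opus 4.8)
The plan is to reduce the statement to the Weierstrass extreme value theorem, so that the whole proof amounts to verifying two hypotheses: that the feasible set $\mathcal{S}_{\mu,\nu}$ is nonempty and compact, and that the objective is continuous. First I would record that $\mathcal{S}_{\mu,\nu}\neq\varnothing$, since the independence coupling $\pi^{\times}$ of Definition~\ref{def:pi_times} always lies in $\mathcal{S}_{\mu,\nu}$ (as already noted right after that definition). This disposes of the only point where something could conceivably fail.

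Next I would argue compactness. The set $\mathcal{S}_{\mu,\nu}$ is the subset of $\mathbb{R}^{pq}$ carved out by finitely many affine equalities — the $p$ row constraints, the $q$ column constraints, and the total-mass constraint (the last being redundant given the others) — together with the $pq$ closed half-spaces $\pi_{u,v}\ge 0$. As a finite intersection of hyperplanes and closed half-spaces it is closed, and as a finite intersection of convex sets it is convex (a polytope). It is bounded because nonnegativity of all entries together with $\pi_{\cdot,\cdot}=1$ forces $0\le\pi_{u,v}\le 1$ for every $u,v$. A closed and bounded subset of the finite-dimensional space $\mathbb{R}^{pq}$ is compact.

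Then I would observe that the map $\pi\mapsto\sum_{u=1}^p\sum_{v=1}^q\bigl(\pi_{u,v}-\tfrac{1}{pq}\bigr)^2$ is a polynomial in the $pq$ entries of $\pi$, hence continuous on all of $\mathbb{R}^{pq}$ and in particular on $\mathcal{S}_{\mu,\nu}$. Weierstrass' theorem then yields a point $\pi^{*}\in\mathcal{S}_{\mu,\nu}$ attaining the infimum, which is the desired solution.

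I do not expect a genuine obstacle here: the argument is the standard "continuous function on a nonempty compact set" template. The only point worth a word of care is the reading of the word \emph{solution}: the proposition asserts that the minimum is \emph{attained}, not that the minimizer is unique. If one wishes, uniqueness can be appended for free, since the objective is strictly convex (its Hessian is $2I_{pq}$) and $\mathcal{S}_{\mu,\nu}$ is convex, so $\pi^{*}$ is in fact the unique minimizer; but this is not required by the statement and I would relegate it to a remark.
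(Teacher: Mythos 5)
Your proof is correct and follows exactly the paper's argument (compactness of $\mathcal{S}_{\mu,\nu}$ plus continuity of the objective, i.e.\ Weierstrass), only spelled out in more detail, including the useful observation that $\pi^{\times}$ guarantees nonemptiness. No gap; the remark on uniqueness via strict convexity is a correct optional addition not needed for the statement.
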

\begin{proof}
The constraints define a compact set and the cost function is continuous.
\end{proof}

In a similarly standard fashion, and in order to develop tools to characterize the now well-defined $\pi^*$, we show that any solution satisfies the Karush–Kuhn–Tucker (KKT) conditions.

\begin{prop}\label{prop:kkt}
Any solution satisfying the KKT conditions is a solution $\pi^*$ of Problem~\ref{pb:chi2_constraints}. Conversely, for any solution of Problem~\ref{pb:chi2_constraints}, there exists a set of Lagrange multipliers such that the corresponding KKT conditions are satisfied.
\end{prop}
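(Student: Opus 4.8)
The plan is to recognize Problem~\ref{pb:chi2} (equivalently, minimizing $IC$ over $\mathcal{S}_{\mu,\nu}$) as a convex quadratic program with affine equality constraints and nonnegativity constraints, and to invoke the standard KKT theory for such programs. First I would write the problem in the explicit form
\begin{equation*}
\min_{\pi}\ \tfrac12\sum_{u,v}\pi_{u,v}^2
\quad\text{s.t.}\quad
\sum_v \pi_{u,v}=\mu_u\ (\forall u),\quad
\sum_u \pi_{u,v}=\nu_v\ (\forall v),\quad
\pi_{u,v}\ge 0\ (\forall u,v),
\end{equation*}
noting that the redundant total-mass constraint $\pi_{\cdot,\cdot}=1$ follows from the marginal constraints and may be dropped. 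The objective is strictly convex (its Hessian is the identity), the feasible set is nonempty (it contains $\pi^\times$) and compact, and all constraints are affine. I would then form the Lagrangian with multipliers $(\lambda_u)$ for the row constraints, $(\rho_v)$ for the column constraints, and $(\gamma_{u,v}\ge 0)$ for the sign constraints, and write out the stationarity condition $\pi_{u,v}-\lambda_u-\rho_v-\gamma_{u,v}=0$ together with primal feasibility, dual feasibility $\gamma_{u,v}\ge 0$, and complementary slackness $\gamma_{u,v}\pi_{u,v}=0$.

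The two directions are then handled separately. For the converse direction (necessity of KKT at any optimum), the key point is a constraint qualification: since every constraint is affine, the Abadie / linear constraint qualification holds automatically at every feasible point, so the KKT conditions are necessary conditions for optimality — this is the classical Karush--Kuhn--Tucker theorem for programs with affine constraints (see, e.g., any standard reference on nonlinear programming). For the direct direction (sufficiency), I would use convexity: for a convex objective minimized over a convex feasible set defined by affine constraints, the KKT conditions are also sufficient for global optimality. Concretely, if $\pi^*$ satisfies KKT with multipliers $(\lambda,\rho,\gamma)$ and $\pi$ is any other feasible point, then by convexity of the objective $f$,
\begin{equation*}
f(\pi)-f(\pi^*)\ \ge\ \langle \nabla f(\pi^*),\,\pi-\pi^*\rangle
= \sum_{u,v}(\lambda_u+\rho_v+\gamma_{u,v})(\pi_{u,v}-\pi^*_{u,v});
\end{equation*}
the $\lambda_u$ and $\rho_v$ terms vanish because $\pi$ and $\pi^*$ share the same marginals, and $\sum_{u,v}\gamma_{u,v}(\pi_{u,v}-\pi^*_{u,v})=\sum_{u,v}\gamma_{u,v}\pi_{u,v}\ge 0$ using complementary slackness $\gamma_{u,v}\pi^*_{u,v}=0$ and $\gamma_{u,v},\pi_{u,v}\ge 0$. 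Hence $f(\pi)\ge f(\pi^*)$, so $\pi^*$ is a global minimizer.

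I do not anticipate a genuine obstacle here, since this is a textbook convex-quadratic-program argument; the only points requiring a little care are (i) explicitly noting that the sum-to-one constraint is implied by the marginal constraints so the problem is stated without a redundant equality that could muddy the multiplier bookkeeping, and (ii) being explicit that the affine nature of all constraints is what both supplies the constraint qualification in the necessity direction and, together with convexity of the quadratic objective, yields sufficiency in the other direction. A remark could also record that, because the objective is \emph{strictly} convex, the minimizer $\pi^*$ is in fact unique, which will be convenient in the sequel when we speak of ``the'' solution and compare it with the explicit constructions built from $\pi^+$.
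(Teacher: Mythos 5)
Your proposal is correct, and it proves exactly what the proposition asserts; the route differs from the paper's in two minor but real ways. For necessity, the paper verifies Slater's condition, using the fact that no marginal vanishes (Equation~\eqref{eq:ordre_marges}) so that the independence coupling $\pi^{\times}$ is strictly positive; you instead invoke the constraint qualification that holds automatically because \emph{all} constraints are affine, which is slightly more robust (it would not even need the positivity of the marginals) and dispenses with exhibiting a strictly feasible point. For sufficiency, the paper cites the general fact that KKT points of a program with pseudoconvex objective and quasiconvex (here affine) constraints are global minimizers, whereas you give the self-contained first-order computation $f(\pi)-f(\pi^*)\ge\langle\nabla f(\pi^*),\pi-\pi^*\rangle$ and kill the multiplier terms via the shared marginals and complementary slackness; this is more elementary and makes the mechanism explicit. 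Two bookkeeping remarks: you drop the redundant total-mass constraint, while the paper keeps it with multiplier $\theta$ and uses that multiplier in the later derivation of the closed form (Proposition~\ref{prop:forme_pi_star}), so if your version were spliced into the paper you should note that the two KKT systems are equivalent ($\theta$ can be set to zero or absorbed into the $\lambda_u$); and your observation that strict convexity gives uniqueness of $\pi^*$ is a useful addition the paper does not state, since it justifies speaking of ``the'' solution in the sequel.
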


\begin{proof}
We first show that the constraints defining $\mathcal{S}_{\mu,\nu}$ satisfy Slater’s condition: the equality constraints are affine functions of the coordinates of $\pi$; the inequality constraints are convex (indeed, affine) functions of the coordinates of $\pi$; under Equation~\ref{eq:ordre_marges}, no marginal is zero and the independence coupling $\pi^{\times}$ lies in the interior of $\mathcal{S}_{\mu,\nu}$. Therefore, the constraints are qualified, and any solution of Problem~\ref{pb:chi2_constraints} is associated with a set of Lagrange multipliers satisfying the KKT conditions.

Moreover, the problem is convex: the objective function is convex, the equality constraints are affine, and the inequality constraints are affine. Therefore, KKT conditions are sufficient for optimality.
\end{proof}

\subsection{Definition of indeterminacy}
When $H_2$ is replaced by $H$ in problem~\ref{pb:max_entropy}, independence is easily obtained as the solution by solving the KKT conditions (\cite{kullback_contingency_1968}). In the case $\alpha = 2$, the same procedure only succeeds after relaxing the positivity constraint. With this constraint omitted (a proof is provided in~\cite{PBThese}), indeterminacy, defined below, is a solution—hence a partial solution.

\begin{mydef}[Indeterminacy]
\label{def:pi_plus}
\begin{equation*}
\pi^{+}_{u,v} = \frac{\mu_u}{q} + \frac{\nu_v}{p} - \frac{1}{pq}, \quad \forall\, 1\le u \le p,\ \forall\, 1\le v \le q.
\end{equation*}
\end{mydef}

Indeterminacy was originally introduced in \cite{MAR84} (see also \cite{MAR86}). The notation ``+'' refers to the additive form, by analogy with the product form of independence $\pi^{\times}$. The term stems from its interpretation in Mathematical Relational Analysis; see \cite{MAR84,MAR86} and Section~\ref{ssec:decomposition}.

\begin{rem}
The coupling $\pi^{+}$ also appears in the definition of the Janson--Vegelius criterion $JV$~\cite{JV77}, which can be seen as a cosine similarity to $\pi^{+}$ (as opposed to the $\chi^2$ criterion, based on $\pi^{\times}$).
\end{rem}

One can easily check that indeterminacy coupling $\pi^{+}$ always satisfies all equality constraints defining $\mathcal{S}_{\mu,\nu}$, except nonnegativity. This leads to Proposition~\ref{prop:solution_pi_plus}, which characterizes the set of marginals for which $\pi^+$ solves problem~\ref{pb:max_entropy}.

\begin{prop}[$\pi^+$ as a solution]\label{prop:solution_pi_plus}
The equality $\pi^* = \pi^+$ holds if and only if the condition~\eqref{cond:H} holds.
\end{prop}

\begin{proof}
Using Proposition~\ref{prop:kkt}, we may characterize the solution using the KKT conditions. If the non-negativity constraint on $\pi^*$ is relaxed, one obtains (see \cite{PBThese} for a proof) $\pi^* = \pi^+$ by solving KKT system. 
Moreover, $\pi^+$ is non-negative if and only if Condition~\ref{cond:H} holds, which concludes the proof.
\end{proof}

When Condition~\ref{cond:H} is not satisfied by the marginals, $\pi^+$ is not even a probability distribution and is therefore clearly not feasible. Nevertheless, we will use this construction as a starting point to derive the general solution.

\subsection{Decomposition of indeterminacy\label{ssec:decomposition}}

The formula defining indeterminacy given in Definition~\ref{def:pi_plus} does not provide an efficient way to generate draws under indeterminacy, nor does it offer a clear interpretation of its meaning. We therefore propose to rewrite it so as to view indeterminacy as a classical mixture of three independent couplings. The standard expression of the indeterminacy coupling can be rewritten as:
\begin{equation*}
	\pi^+_{u,v}
	= \left[\frac{\mu_u - \mu_1}{q}\right]
	+ \left[\frac{\nu_v - \nu_1}{p}\right]
	+ \left[\frac{\mu_1}{q} + \frac{\nu_1}{p} - \frac{1}{pq}\right], \quad \forall\, 1\le u \le p,\ \forall\, 1\le v \le q.
\end{equation*}

First, note that the three bracketed terms are non-negative (remember marginals are ordered by Condition~(\ref{cond:H})). We may therefore re-normalize them to extract probability distributions. Formally:
\begin{equation}\label{eq:decomp1}
	\pi^+_{u,v}
	= (1 - p\mu_1)\left[\frac{\mu_u - \mu_1}{q(1 - p\mu_1)}\right]
	+ (1 - q\nu_1)\left[\frac{\nu_v - \nu_1}{p(1 - q\nu_1)}\right]
	+ (p\mu_1 + q\nu_1 - 1)\left[\frac{1}{pq}\right].
\end{equation}

\begin{rem}[Tight case]
If either of the first two bracketed terms is identically zero, the corresponding marginals $\mu$ or $\nu$ is uniform. In that case, $\pi^+ = \pi^{\times}$, and the interpretation of indeterminacy is trivial. 
\end{rem}

In the first two square brackets, the formula removes as much of the uniform component as possible from the marginals to which it is applied (respectively $\mu$ and $\nu$). The transformed distribution therefore concentrates its mass on the modes. Reading~formula~\ref{eq:decomp1}, we obtain a decomposition of an indeterminacy draw as stated in Proposition~\ref{prop:drawing+}.

\begin{prop}[Sampling under indeterminacy]\label{prop:drawing+}
Assume Condition~\eqref{cond:H}. Let $C\in\{1,2,3\}$ with
\[
\mathbb{P}(C=1)=1-p\mu_1,\qquad \mathbb{P}(C=2)=1-q\nu_1,\qquad \mathbb{P}(C=3)=p\mu_1+q\nu_1-1.
\]
Then a draw $(U,V)\sim \pi^+$ can be generated as follows:
\begin{enumerate}
\item draw $C$;
\item if $C=1$, draw $U$ from $(\mu-\mu_1\mathbf{1})/(1-p\mu_1)$ and $V$ uniformly on $\{1,\dots,q\}$, independently;
\item if $C=2$, draw $U$ uniformly on $\{1,\dots,p\}$ and $V$ from $(\nu-\nu_1\mathbf{1})/(1-q\nu_1)$, independently;
\item if $C=3$, draw $(U,V)$ uniformly on $\{1,\dots,p\}\times\{1,\dots,q\}$.
\end{enumerate}
\end{prop}

Under this representation, $\pi^+$ removes the largest possible uniform component from each marginal. In particular, following the interpretation of~\cite{harremoes2001inequalities}, indeterminacy can be viewed as the $\mathbb{L}_2$-projection of the uniform distribution on $\{1,\dots,p\}\times\{1,\dots,q\}$ onto $\mathcal{S}_{\mu,\nu}$.

The price of enforcing the marginals is that, when $C=1$, the first coordinate is drawn from $(\mu-\mu_1\mathbf{1})/(1-p\mu_1)$, which is more concentrated on the modes of $\mu$ than $\mu$ itself; in the other cases ($C\in\{2,3\}$), the first coordinate is uniform. A symmetric effect holds for the second coordinate: concentration on the modes of $\nu$ occurs only when $C=2$.

Finally, Proposition~\ref{prop:drawing+} provides an intuitive justification for the role of indeterminacy in reducing the index of coincidence (Problem~\ref{pb:max_entropy}): in the three cases at least one margin is drawn uniformly.

\section{General solution form\label{sec:forme}}
\subsection{Lagrangian approach}

Proposition~\ref{prop:kkt} suggests that the solution $\pi^*$ can be characterized via the Karush--Kuhn--Tucker (KKT) conditions. We begin by rewriting Problem~\ref{pb:max_entropy} in standard form, indicating in parentheses the coefficients used in the Lagrangian.

% (2) Problème : remplacer les $$...$$ par \[...\]
\begin{problem}\label{pb:chi2_constraints}
\[
\min_{\pi} \sum_{u=1}^p \sum_{v=1}^q \pi_{u,v}^2
\]
subject to the constraints:
\[
\left\{
    \begin{array}{ll}
        \forall u, & \pi_{u,\cdot} = \mu_u \quad (-2\lambda_u),\\
        \forall v, & \pi_{\cdot,v} = \nu_v \quad (-2\omega_v),\\
        \forall (u,v), & -\pi_{u,v} \le 0 \quad (2r_{u,v}),\\
        & \pi_{\cdot,\cdot} = 1 \quad (-2\theta).
    \end{array}
\right.
\]
\end{problem}

Note that under Condition~\ref{cond:H}, we would relax the nonnegativity constraint and $r$ would vanish from the Lagrangian. Proposition~\ref{prop:solution_pi_plus} states that, in that case, the solution would be $\pi^+$. Therefore we expect the difference between the general solution $\pi^*$ and $\pi^+$ to be entirely captured by the multiplier $r$. The Lagrangian $L$ is given by:

\begin{align*}
L(\pi,r,\lambda,\omega,\theta)
&= \sum_{u,v} \pi_{u,v}^2
- \sum_{u} 2\lambda_u \left(\pi_{u,\cdot} - \mu_u\right)
- \sum_{v} 2\omega_v \left(\pi_{\cdot,v} - \nu_v\right)\\
&\quad - 2\theta \left(\pi_{\cdot,\cdot} - 1\right)
- 2 \sum_{u,v} r_{u,v} \pi_{u,v}.
\end{align*}

\begin{prop}\label{prop:forme_pi_star}
The solution $\pi^*$ of Problem~\ref{pb:chi2_constraints} takes the following form: for all $(u,v)$,
\[
\pi^*_{u,v}\in\left\{
0,\;
\pi^+_{u,v}
- \frac{R_{u,\cdot}}{q}
- \frac{R_{\cdot,v}}{p}
+ \frac{R}{pq}
\right\},
\]
where, for all $1\le u\le p$ and $1\le v\le q$,
\[
R_{u,\cdot}:=\sum_{v=1}^q r_{u,v},
\qquad
R_{\cdot,v}:=\sum_{u=1}^p r_{u,v},
\qquad
R:=R_{\cdot,\cdot}:=\sum_{u=1}^p\sum_{v=1}^q r_{u,v},
\]
with $r\ge 0$ and $\pi^*\ge 0$.
\end{prop}
\begin{proof}
A proof is provided in Appendix~\ref{sec:proof_forme_pi_star}.
\end{proof}

\begin{rem}
Although Problem~\ref{pb:max_entropy} aims at maximizing dispersion, the optimal coupling $\pi^*$ may contain zero entries. Indeed, because the marginals are fixed, adding mass to a zero entry necessarily requires removing mass elsewhere; the resulting perturbation increases $\sum_{u,v}\pi_{u,v}^2$ (see Proposition~\ref{prop:croissance_pi_star}). Hence, zeros may be enforced by the nonnegativity constraints through complementary slackness.
\end{rem}

\subsection{A staircase structure containing the zeros\label{ssec:escalier}}

In this subsection, we show that (under the ordering~\eqref{eq:ordre_marges}) the zero entries of $\pi^*$ are localized within a staircase structure illustrated in Figure~\ref{fig:pi_star} and defined by the smallest indices. To establish the existence of this staircase of zeros, we show that $\pi^*$ is monotone according to the following definition.

\begin{mydef}[Monotonicity]\label{def:monotone}
A discrete function $m(u,v)$ of two variables is said to be monotone if, for every fixed $v$ (resp.~$u$), the map $u \mapsto m(u,v)$ (resp.~$v \mapsto m(u,v)$) is non-decreasing.
\end{mydef}

We begin by proving the following proposition, which can be interpreted as monotonicity of the entries of $\pi^*$ with respect to the increasing order of the marginals, and hence of the indices (remember Equation~\eqref{eq:ordre_marges} holds all along the paper).

\begin{prop}\label{prop:croissance_pi_star}
When the marginals are ordered according to Equation~\eqref{eq:ordre_marges}, the solution $\pi^*$ is monotone in the sense of Definition~\ref{def:monotone}.
\end{prop}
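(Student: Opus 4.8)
The plan is to work directly from the KKT characterization established in Proposition~\ref{prop:forme_pi_star}. For each cell $(u,v)$ we have either $\pi^*_{u,v}=0$ or $\pi^*_{u,v}=\lambda_u+\omega_v+\theta$ (the positivity region, where $r_{u,v}=0$, so that Equation~\eqref{eq:gradient_lagrangien} reduces to $\pi^*_{u,v}=\lambda_u+\omega_v+\theta$). The key structural fact to extract first is that the multipliers $\lambda_u$ are nondecreasing in $u$ and the $\omega_v$ are nondecreasing in $v$. To see this, fix two rows $u<u'$. If $\pi^*_{u,v}>0$ for some $v$, I would compare the two rows: since $\mu_u\le\mu_{u'}$ and both rows have the same support structure issue to resolve, the cleanest route is to sum Equation~\eqref{eq:gradient_lagrangien} over the columns where \emph{both} rows are positive and exploit that on a zero cell the "would-be" value $\lambda_u+\omega_v+\theta$ is $\le 0$. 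Concretely, on the support of row $u$ we have $\mu_u=\sum_{v:\pi^*_{u,v}>0}(\lambda_u+\omega_v+\theta)$; I would argue the support of row $u'$ contains that of row $u$ (a consequence of the staircase shape anticipated in Subsection~\ref{ssec:escalier}, but provable here directly: if $(u,v)$ is positive and $(u',v)$ is zero with $u<u'$, one derives $\lambda_{u}+\omega_v+\theta>0\ge\lambda_{u'}+\omega_v+\theta$, i.e. $\lambda_u>\lambda_{u'}$, which will be contradicted), and then comparing the two marginal sums forces $\lambda_u\le\lambda_{u'}$.

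Once monotonicity of the multipliers is in hand, monotonicity of $\pi^*$ follows by a short case analysis. Fix $v$ and take $u<u'$. If $\pi^*_{u,v}=0$ then $\pi^*_{u',v}\ge 0=\pi^*_{u,v}$ and we are done. If $\pi^*_{u,v}>0$, then I claim $\pi^*_{u',v}>0$ as well (this is the support-inclusion statement above), and in that case $\pi^*_{u',v}=\lambda_{u'}+\omega_v+\theta\ge\lambda_u+\omega_v+\theta=\pi^*_{u,v}$ because $\lambda_{u'}\ge\lambda_u$. The column direction is symmetric, swapping the roles of $(\lambda,\mu,p,u)$ and $(\omega,\nu,q,v)$. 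So the whole argument reduces to two ingredients: the ordering of the Lagrange multipliers, and the "upward/rightward closure" of the positivity set (equivalently, the staircase localization of the zeros).

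I expect the main obstacle to be establishing the support-inclusion / multiplier-ordering cleanly without circularity, since the staircase structure and the monotonicity are logically intertwined. The honest way to break the loop is to prove the multiplier ordering and the support inclusion \emph{simultaneously} by contradiction: assume $u<u'$ but either $\lambda_u>\lambda_{u'}$ or there is a column $v$ with $\pi^*_{u,v}>0=\pi^*_{u',v}$; from $\pi^*_{u,v}>0$ we get $\lambda_u+\omega_v+\theta>0$, and from $\pi^*_{u',v}=0$ together with $r_{u',v}\ge 0$ and Equation~\eqref{eq:gradient_lagrangien} we get $\lambda_{u'}+\omega_v+\theta\le 0$ (here I use that $0=\pi^*_{u',v}=\lambda_{u'}+\omega_v+\theta+r_{u',v}$ with $r_{u',v}\ge0$), hence $\lambda_{u'}<\lambda_u$; feeding this back into the marginal-sum identities for rows $u$ and $u'$ over their common positive support, using $\mu_u\le\mu_{u'}$, produces a strict inequality in the wrong direction and the contradiction. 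A couple of degenerate sub-cases (a row entirely zero, which under \eqref{eq:ordre_marges} and $\mu_u>0$ cannot happen; or ties $\mu_u=\mu_{u'}$, handled by the same inequalities with equalities) need to be checked, but they are routine.
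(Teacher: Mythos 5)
Your proposal is correct, but it takes a genuinely different route from the paper. The paper proves Proposition~\ref{prop:croissance_pi_star} by a purely variational exchange argument: assuming $\pi^*_{u_0,v_0}>\pi^*_{u_0+1,v_0}$, the marginal ordering forces a compensating column $v_0'$ with the reverse strict inequality, and a $2\times 2$ transfer of mass $\epsilon$ across the four cells $(u_0,v_0),(u_0,v_0'),(u_0+1,v_0),(u_0+1,v_0')$ stays feasible and strictly decreases $\sum\pi_{u,v}^2$ (the change is $4\epsilon^2+2\epsilon(\pi^*_{u_0+1,v_0}-\pi^*_{u_0,v_0})+2\epsilon(\pi^*_{u_0,v_0'}-\pi^*_{u_0+1,v_0'})\le-4\epsilon^2$), contradicting optimality; it needs only feasibility, optimality and the strict convexity of the square, and no multipliers at all. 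You instead work from Proposition~\ref{prop:kkt} and Equation~\eqref{eq:gradient_lagrangien}: stationarity plus complementary slackness give $\pi^*_{u,v}=\lambda_u+\omega_v+\theta$ on the positive cells and $\lambda_u+\omega_v+\theta\le 0$ on the zero cells (equivalently $\pi^*_{u,v}=\max(0,\lambda_u+\omega_v+\theta)$), and you derive the ordering of the multipliers and the upward closure of the support simultaneously; your contradiction step does close (if $\pi^*_{u,v_0}>0=\pi^*_{u',v_0}$ then $\lambda_{u'}<\lambda_u$, the support of row $u'$ is contained in that of row $u$, and comparing the row sums gives the strict inequality $\mu_{u'}<\mu_u$ against $\mu_u\le\mu_{u'}$), and your handling of the degenerate cases is right since $\mu_u>0$ rules out an all-zero row. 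What each approach buys: the paper's proof is shorter and self-contained, and would work verbatim for any strictly convex separable cost; yours requires the KKT apparatus (which the paper has already established, so there is no circularity) but delivers more structure in one pass — the monotone ordering of $\lambda$ and $\omega$ and the staircase localization of the zeros, which the paper only extracts afterwards in Subsection~\ref{ssec:escalier} as a consequence of monotonicity. If you write it up, you can streamline the argument considerably by stating the $\max$ representation first; the case $\lambda_u>\lambda_{u'}$ then needs no separate refutation, since termwise domination plus $\mu_u\le\mu_{u'}$ forces the two rows to coincide, which still yields the desired inequality.
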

\begin{proof}
A proof is provided in Appendix~\ref{sec:proof_croissance_pi_star}.
\end{proof}

\begin{rem}
One can check that Proposition~\ref{prop:croissance_pi_star} holds in particular for the construction $\pi^+$ given by Equation~\eqref{def:pi_plus}.
\end{rem}

Proposition~\ref{prop:forme_pi_star} implies that $\pi^*$ only takes the values $0$ and strictly positive entries.
Row-wise monotonicity forces the zeros in each row to occupy the leftmost columns, while column-wise monotonicity implies that the number of zeros is non-increasing with the row index.
Hence $\pi^*$ exhibits the staircase pattern of zeros shown in Fig.~\ref{fig:pi_star}.

\begin{figure}[h]
\begin{center}
$\begin{matrix}
0 & \ldots & 0 & 0 & \ldots & 0 & +\\
0 & \ldots & \vdots & + & \ldots & + & +\\
\vdots & \ldots & 0 & \vdots & \ldots & + & +\\
0 & \ldots & + & \vdots & \ldots & + & +\\
+ & \ldots & \vdots & \vdots & \ldots & + & +\\
\vdots & \ldots & \vdots & \vdots & \ldots & + & +\\
+ & \ldots & + & + & \ldots & + & +
\end{matrix}$
\end{center}
\caption{General shape of $\pi^*$: each row begins with a sequence of zeros, followed by strictly positive entries represented by $+$; the number of zeros decreases when row index increases.}
\label{fig:pi_star}
\end{figure}

To formally construct this staircase, we introduce in Definition~\ref{def:pq_max_null} the index $q_u$ of the last zero entry of $\pi^*$ in row $u$, and the index $p_v$ of the last zero entry of $\pi^*$ in column $v$.

\begin{mydef}[Corners]\label{def:pq_max_null}
For each row $1 \le u \le p$, define
\begin{equation}
q_u := \max\left\{1 \le v \le q \; : \; \pi^*_{u,v} = 0 \right\}.
\end{equation}
with $q_u=0$ if the set is empty.
Similarly, for each column $1 \le v \le q$, define
\begin{equation*}
p_v := \max\left\{1 \le u \le p \; : \; \pi^*_{u,v} = 0 \right\}
\end{equation*}
with $p_v=0$ if the set is empty.
\end{mydef}

The monotonicity of $\pi^*$ implies that $q_u$ is non-increasing in $u$ and that $p_v$ is non-increasing in $v$. On each row $u$, the complementary slackness condition $r \pi^* = 0$ for the Lagrange multipliers allows us to write, as soon as $v \ge q_u + 1$,
\[
\pi^*_{u,v} > 0 \quad \text{ and } \quad r_{u,v} = 0.
\]
At this stage, two elements are still missing to obtain a closed-form expression for $\pi^*$: expressing, as functions of the marginals, the "corners" of the staircase (corresponding to the quantities defined in Definition~\ref{def:pq_max_null}) and the multiplier $r$ itself.

\section{Rectangle-solution as a milestone\label{sec:hyp_rectangle}}
Assuming that the zero pattern of the optimizer $\pi^*$ reduces to a single upper-left rectangle, the KKT conditions can be solved in closed form, yielding an explicit coupling that we denote by $\tilde{\pi}^+$. As discussed in Subsection~\ref{ssec:analyse_pi_tilde}, the transformation $\pi^+ \mapsto \tilde{\pi}^+$ admits a simple interpretation: it truncates the (possibly negative) entries of $\pi^+$ on that rectangle and redistributes the removed mass over the remaining entries so as to preserve the prescribed marginals. This mass-redistribution principle is the key building block of the iterative construction developed in the general case.

\subsection{Assumption: a single rectangle of zeros}
The previous results on the structure of $\pi^*$ (Proposition~\ref{prop:forme_pi_star}) and on the localization of its zeros (Proposition~\ref{prop:croissance_pi_star}) still depend on the Lagrange multiplier $r$, which remains to be characterized.
To make this dependence explicit, we first study a simplified configuration in which the zeros of $\pi^*$ form a single rectangular block with corner $(p_1,q_1)$ (see Figure~\ref{fig:pi_star_hyp}). In terms of the indices introduced in Definition~\ref{def:pq_max_null}, we assume:
\begin{equation}\label{cond:rectangle}
\exists 0\le p_1\le p,~ \exists 0\le q_1\le q ~/~ q_u = q_1\,\mathds{1}_{u \le p_1}.
\end{equation}

\begin{figure}[h]
\begin{center}
$\begin{matrix}
0 & \ldots & 0 & + & \ldots & +\\
\vdots & \ldots & \vdots & + & \ldots & +\\
0 & \ldots & 0 & + & \ldots & +\\
+ & \ldots & + & + & \ldots & +\\
+ & \ldots & \vdots & + & \ldots & +
\end{matrix}$
\end{center}
\caption{Zero pattern of $\pi^*$ under Assumption~\eqref{cond:rectangle}: the first $p_1$ rows contain exactly $q_1$ leading zeros, whereas the remaining rows contain none}\label{fig:pi_star_hyp}
\end{figure}
Setting $p_1=q_1=0$ yields an empty zero block and recovers the indeterminacy coupling $\pi^+$. Consequently, Condition~\eqref{cond:H} can be viewed as a degenerate instance of Assumption~\eqref{cond:rectangle} and is therefore strictly stronger.
When~\eqref{cond:rectangle} does not hold, the set of zero entries exhibits a staircase pattern rather than a rectangle, so the number of zeros varies across rows (and, symmetrically, across columns). The rectangular case is nonetheless a useful milestone: it isolates the core mass-redistribution step that will later be iterated to construct the general solution.

\begin{prop}[Form of $\pi^*$ under Assumption~\eqref{cond:rectangle}] \label{prop:rectangle}
Assume Equation~\eqref{cond:rectangle}. There exist integers $p_1$ and $q_1$ such that:
\begin{equation}\label{eq:pi_plus_generique}
\forall (u,v), \quad
\pi^*_{u,v} =
\left\{
    \begin{array}{ll}
        0, & \text{if $u\le p_1$ and $v\le q_1$,}\\[0.2em]
		\pi^+_{u,v} - \dfrac{R_{u,\cdot}}{q} - \dfrac{R_{\cdot,v}}{p} + \dfrac{R}{pq}, & \text{otherwise.}
    \end{array}
\right.
\end{equation}
\end{prop}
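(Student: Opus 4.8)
The plan is to combine the general form of the solution established in Proposition~\ref{prop:forme_pi_star} with the monotone staircase structure of the zero set (Proposition~\ref{prop:croissance_pi_star} and Definition~\ref{def:pq_max_null}), and then feed in Assumption~\eqref{cond:rectangle} to collapse the staircase into a single rectangle. Concretely, Proposition~\ref{prop:forme_pi_star} already tells us that at each entry $(u,v)$ either $\pi^*_{u,v}=0$ or $\pi^*_{u,v} = \pi^+_{u,v} - R_{u,\cdot}/q - R_{\cdot,v}/p + R/pq$, with $r\ge0$ and complementary slackness $r_{u,v}\pi^*_{u,v}=0$. So the only thing to prove is that the zero entries are \emph{exactly} the rectangle $\{u\le p_1,\,v\le q_1\}$, i.e.\ that (i) every entry inside the rectangle vanishes and (ii) every entry outside it is strictly positive and therefore given by the closed-form expression.

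First I would recall from Subsection~\ref{ssec:escalier} that, by monotonicity of $\pi^*$, the quantity $q_u$ of Definition~\ref{def:pq_max_null} is nonincreasing in $u$, and that $\pi^*_{u,v}=0$ for $v\le q_u$ while $\pi^*_{u,v}>0$ for $v>q_u$ (with the convention $q_u=0$ meaning the row has no zeros). Assumption~\eqref{cond:rectangle} says precisely that $q_u = q_1\mathds{1}_{u\le p_1}$; substituting this into the row description gives: for $u\le p_1$, $\pi^*_{u,v}=0$ exactly when $v\le q_1$; for $u>p_1$, $\pi^*_{u,v}>0$ for all $v$. That is statements (i) and (ii) verbatim. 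It remains only to observe that on the complement of the rectangle $r_{u,v}=0$ by complementary slackness (since $\pi^*_{u,v}>0$ there), so that in the formula from Proposition~\ref{prop:forme_pi_star} one is in the second branch; the expression $\pi^+_{u,v} - R_{u,\cdot}/q - R_{\cdot,v}/p + R/pq$ is valid everywhere as a candidate value, and it equals $\pi^*_{u,v}$ outside the rectangle and (by the first branch) is overridden by $0$ inside. This yields exactly~\eqref{eq:pi_plus_generique}, with $p_1$ as in Assumption~\eqref{cond:rectangle} and $q_1$ the common value of the nonzero $q_u$'s (set $q_1=0$ if no row has a zero, consistent with the remark that $\pi^*=\pi^+$ is the case $p_1=q_1=0$).

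The only genuinely delicate point is making sure the rectangle is \emph{full}, i.e.\ that there is no "hidden" zero of the closed-form expression outside the staircase that would spoil the dichotomy — but this cannot happen by the definition of $q_u$ as the \emph{last} zero in row $u$ together with monotonicity, so an entry with $v>q_u$ is strictly positive by definition, not merely generically. I therefore expect this proposition to require essentially no new computation: it is a bookkeeping consequence of Propositions~\ref{prop:forme_pi_star} and~\ref{prop:croissance_pi_star} once Assumption~\eqref{cond:rectangle} is imposed, and the main (minor) obstacle is stating cleanly the degenerate boundary cases $p_1=0$ or $q_1=0$ so that the formula reduces correctly to $\pi^*=\pi^+$.
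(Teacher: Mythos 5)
Your proposal is correct and follows essentially the same route as the paper, whose proof simply combines Proposition~\ref{prop:forme_pi_star}, Proposition~\ref{prop:croissance_pi_star} and Definition~\ref{def:pq_max_null} under Assumption~\eqref{cond:rectangle}; your write-up merely spells out the bookkeeping (complementary slackness outside the rectangle and the degenerate case $p_1=q_1=0$) that the paper leaves implicit.
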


\begin{proof}
The result follows by combining Propositions~\ref{prop:forme_pi_star} and Proposition~\ref{prop:croissance_pi_star} with Definition~\ref{def:pq_max_null} under Assumption~\eqref{cond:rectangle}.
\end{proof}

\subsection{Closed-form expression of the rectangle-solution}

Recall from Subsection~\ref{ssec:escalier} that identifying $\pi^*$ reduces to two ingredients: the staircase breakpoints and the Lagrange multiplier $r$. Under Assumption~\eqref{cond:rectangle}, both admit explicit expressions in terms of the marginals.
We introduce the following notations for any $1\le u\le p$:
\begin{equation*}
m[u,:v'] := \sum_{k=1}^{v'} m_{u,k},
\end{equation*}
and more generally:
\begin{equation*}
m[u,v:v'] := \sum_{k=v}^{v'} m_{u,k},
\end{equation*}
with the symmetric definitions $m[:u',v]$ and $m[u:u',v]$.

Since $\pi^*$ and $\pi^+$ differ by the addition of $r$, we quantify the mass that would be removed by truncating $\pi^+$ on the rectangle $(p_1,q_1)$ (note that $\pi^+$ already satisfies the marginal constraints).

\begin{mydef}[Mass loss]\label{def:perte_masse}
For $u\le p_1$ and $v\le q_1$, define the row/column deficits induced by truncating $\pi^+$ on the rectangle $(p_1,q_1)$:
\begin{align}
\Delta_{u,\cdot}
&:= \mu_u - \sum_{v=q_1+1}^q \pi^+_{u,v}
= \pi^+[u,:q_1], \qquad &&1\le u\le p_1, \label{def:Delta_u}\\
\Delta_{\cdot,v}
&:= \nu_v - \sum_{u=p_1+1}^p \pi^+_{u,v}
= \pi^+[:p_1,v], \qquad &&1\le v\le q_1, \label{def:Delta_v}\\
\Delta
&:= \sum_{u=1}^{p_1} \sum_{v=1}^{q_1} \pi^+_{u,v}
= 1 - \sum_{u=1}^p \sum_{v=q_1+1}^q \pi^+_{u,v}
      - \sum_{u=p_1+1}^p \sum_{v=1}^{q_1} \pi^+_{u,v}. \label{def:Delta}
\end{align}
\end{mydef}
The correction term $r$ (equivalently, $R_{u,\cdot}$ and $R_{\cdot,v}$) is an explicit function of $\Delta_{u,\cdot}$, $\Delta_{\cdot,v}$ and $\Delta$, which suffices to recover $\pi^*$.
\begin{prop}\label{prop:calcul_r}
Assume Equation~\eqref{cond:rectangle}. For any row $1\le u\le p$ and any column $1\le v\le q$, we have:
\begin{align}
R_{u,\cdot}
&= -\frac{q}{q-q_1}\left[\Delta_{u,\cdot} + \frac{\Delta}{p-p_1}\right]\mathds{1}_{u\le p_1}, \label{eq:Ru}\\
R_{\cdot,v}
&= -\frac{p}{p-p_1}\left[\Delta_{\cdot,v} + \frac{\Delta}{q-q_1}\right]\mathds{1}_{v\le q_1}, \label{eq:Rv}\\
R
&= -\frac{pq}{(p-p_1)(q-q_1)}\,\Delta. \label{eq:R}
\end{align}
\end{prop}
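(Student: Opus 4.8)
The plan is to combine two facts: complementary slackness, which confines the multiplier $r$ to the block of zeros, and the marginal constraints, which $\pi^*$ satisfies because it is feasible. First I would record that, under Assumption~\eqref{cond:rectangle}, Propositions~\ref{prop:forme_pi_star} and~\ref{prop:croissance_pi_star} together with Definition~\ref{def:pq_max_null} force the zero set of $\pi^*$ to be exactly the rectangle $\{(u,v):u\le p_1,\ v\le q_1\}$, so that complementary slackness $\pi^*_{u,v}\,r_{u,v}=0$ gives $r_{u,v}=0$ for every entry outside that rectangle. In particular $R_{u,\cdot}=0$ whenever $u>p_1$ and $R_{\cdot,v}=0$ whenever $v>q_1$, which already accounts for the indicator factors $\mathds{1}_{u\le p_1}$ and $\mathds{1}_{v\le q_1}$. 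I would also dispose of degeneracies here: if $p_1=q_1=0$ the rectangle is empty, $\Delta=0$, and all three formulas reduce to $0$, consistently with $\pi^*=\pi^+$; and when $p_1\ge1$ the first column of $\pi^*$ cannot be identically zero because $\nu_1>0$, so $p_1<p$, and symmetrically $q_1<q$, ensuring the denominators $p-p_1$ and $q-q_1$ are positive.

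Next I would write the row marginal constraint $\pi^*_{u,\cdot}=\mu_u$ for a fixed $u\le p_1$. Since $\pi^*_{u,v}=0$ for $v\le q_1$, this reads
\[
\mu_u=\sum_{v=q_1+1}^q\left[\pi^+_{u,v}-\frac{R_{u,\cdot}}{q}-\frac{R_{\cdot,v}}{p}+\frac{R}{pq}\right].
\]
Collecting the $v$-independent terms, using $\sum_{v=q_1+1}^qR_{\cdot,v}=0$ (established above) and $\mu_u-\sum_{v=q_1+1}^q\pi^+_{u,v}=\Delta_{u,\cdot}$ from Definition~\ref{def:perte_masse}, this collapses to $R_{u,\cdot}=\frac{R}{p}-\frac{q}{q-q_1}\Delta_{u,\cdot}$ for $u\le p_1$. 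The symmetric computation on columns gives $R_{\cdot,v}=\frac{R}{q}-\frac{p}{p-p_1}\Delta_{\cdot,v}$ for $v\le q_1$.

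To close the system I would sum the row identity over $u=1,\dots,p_1$. On the left, $\sum_{u=1}^{p_1}R_{u,\cdot}=\sum_{u=1}^pR_{u,\cdot}=R$ since the tail rows contribute nothing; on the right, $\sum_{u=1}^{p_1}\Delta_{u,\cdot}=\sum_{u=1}^{p_1}\sum_{v=1}^{q_1}\pi^+_{u,v}=\Delta$ by~\eqref{def:Delta}. This yields $R=\tfrac{p_1}{p}R-\tfrac{q}{q-q_1}\Delta$, i.e.\ $R\,\tfrac{p-p_1}{p}=-\tfrac{q}{q-q_1}\Delta$, which is exactly~\eqref{eq:R}. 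Substituting this value of $R$ back into the expressions for $R_{u,\cdot}$ and $R_{\cdot,v}$ and factoring produces~\eqref{eq:Ru} and~\eqref{eq:Rv}.

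There is no deep obstacle: the argument is essentially linear algebra once the support of $r$ is known. The points that need care are (i) establishing that the zero set is \emph{exactly} the rectangle, so that complementary slackness kills precisely the right multipliers — in particular the cross-sums $\sum_{v>q_1}R_{\cdot,v}$ and $\sum_{u>p_1}R_{u,\cdot}$ — and (ii) ruling out $p_1=p$ or $q_1=q$ using the strict positivity of $\mu_1,\nu_1$ from~\eqref{eq:ordre_marges}, which guarantees the divisions are legitimate. The organizing idea is to extract the scalar $R$ first, via the self-consistent summation, and only then read off the aggregated row and column multipliers.
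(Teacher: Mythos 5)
Your proof is correct and follows essentially the same route as the paper: complementary slackness confines $r$ to the zero rectangle, and the marginal constraints applied to the form of Proposition~\ref{prop:forme_pi_star} determine the aggregated multipliers. The only (harmless) difference is how the scalar $R$ is extracted — the paper sums the total mass $\pi^*_{\cdot,\cdot}=1$ over the non-zero region, whereas you sum your row identities over $u\le p_1$ using $\sum_{u\le p_1}R_{u,\cdot}=R$ and $\sum_{u\le p_1}\Delta_{u,\cdot}=\Delta$; your explicit verification that $p_1<p$ and $q_1<q$ (so the denominators are positive) is a welcome addition the paper leaves implicit.
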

\begin{proof}
A proof is provided in Appendix~\ref{sec:proof_calcul_r}.
\end{proof}

\begin{rem}\label{rem:r_uv}
Note that defining
\begin{equation}\label{eq:detail_r}
r_{u,v}
= -\pi^+_{u,v}
  - \frac{\Delta_{u,\cdot}}{q-q_1}
  - \frac{\Delta_{\cdot,v}}{p-p_1}
  - \frac{\Delta}{(p-p_1)(q-q_1)}
\end{equation}
allows us to recover the aggregated quantities of Proposition~\ref{prop:calcul_r}. The quantity $r$ defined in this way decreases as $u$ or $v$ increases, since each subtracted term increases. Positivity of $r$ is therefore entirely determined by that of $r_{p_1,q_1}$. This observation will motivate the construction of $r$ in Algorithm~\ref{alg:one_step_general}, and ultimately lead to the closed-form expression of $\pi^*$ in the general case.
\end{rem}

By combining the previous propositions, we define the rectangle-solution $\tilde{\pi}^+$. We shall quote $T$ the transformation $T:\pi^+\mapsto\tilde{\pi}^+$.

\begin{mydef}[Rectangle-solution $\tilde{\pi}$]\label{def:pi_tilde}
\begin{equation*}
\tilde{\pi}^+_{u,v}
:= \pi^+_{u,v}
+ \frac{\sum_{j=1}^{q_1}\pi^+_{u,j}}{q-q_1}\mathds{1}_{u\le p_1}
+ \frac{\sum_{i=1}^{p_1}\pi^+_{i,v}}{p-p_1}\mathds{1}_{v\le q_1}
+ \frac{\sum_{u=1}^{p_1}\sum_{v=1}^{q_1}\pi^+_{u,v}}{(p-p_1)(q-q_1)}
\left[\mathds{1}_{u\le p_1} + \mathds{1}_{v\le q_1} - 1\right],
\end{equation*}
for $u\ge p_1+1$ or $v\ge q_1+1$, and $\tilde{\pi}^+_{u,v}=0$ otherwise.
\end{mydef}
We can finally state the theorem providing the expression of $\pi^*$.

\begin{theorem}\label{th:forme_pi_tilde}
Under Assumption~\ref{cond:rectangle}, we have:
\[
\pi^* = \tilde{\pi}^+.
\]
\end{theorem}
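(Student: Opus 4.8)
The plan is to derive the identity by specializing the structural results already established for $\pi^*$ and then matching the algebra with Definition~\ref{def:pi_tilde}. By Proposition~\ref{prop:rectangle}, under Assumption~\eqref{cond:rectangle} the optimal coupling $\pi^*$ vanishes on the block $\{(u,v):u\le p_1,\ v\le q_1\}$ and, for every other entry, equals $\pi^+_{u,v}-R_{u,\cdot}/q-R_{\cdot,v}/p+R/(pq)$, where $R_{u,\cdot}$, $R_{\cdot,v}$ and $R$ are the row sums, column sums and total sum of the KKT multiplier $r$ attached to $\pi^*$ (Proposition~\ref{prop:forme_pi_star}). Since $\tilde{\pi}^+$ also vanishes on exactly this block by Definition~\ref{def:pi_tilde}, it suffices to check that the two expressions agree off the block.

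Off the block I would substitute the closed forms of the aggregated multipliers provided by Proposition~\ref{prop:calcul_r}, after rewriting the $\Delta$-quantities in their mass-loss form from Definition~\ref{def:perte_masse}: $\Delta_{u,\cdot}=\pi^+[u,:q_1]=\sum_{j=1}^{q_1}\pi^+_{u,j}$, $\Delta_{\cdot,v}=\pi^+[:p_1,v]=\sum_{i=1}^{p_1}\pi^+_{i,v}$ and $\Delta=\sum_{u=1}^{p_1}\sum_{v=1}^{q_1}\pi^+_{u,v}$. Then $-R_{u,\cdot}/q=\frac{1}{q-q_1}\!\left(\sum_{j=1}^{q_1}\pi^+_{u,j}\right)\mathds{1}_{u\le p_1}+\frac{\Delta}{(p-p_1)(q-q_1)}\mathds{1}_{u\le p_1}$, symmetrically $-R_{\cdot,v}/p=\frac{1}{p-p_1}\!\left(\sum_{i=1}^{p_1}\pi^+_{i,v}\right)\mathds{1}_{v\le q_1}+\frac{\Delta}{(p-p_1)(q-q_1)}\mathds{1}_{v\le q_1}$, and $R/(pq)=-\frac{\Delta}{(p-p_1)(q-q_1)}$. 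Adding these three terms, the two ``row/column'' pieces reproduce the middle two summands of Definition~\ref{def:pi_tilde}, while collecting the three copies of $\Delta/((p-p_1)(q-q_1))$ yields precisely the factor $[\mathds{1}_{u\le p_1}+\mathds{1}_{v\le q_1}-1]$. Hence $\pi^+_{u,v}-R_{u,\cdot}/q-R_{\cdot,v}/p+R/(pq)$ equals $\tilde{\pi}^+_{u,v}$ term by term, and together with the agreement on the zero block this gives $\pi^*=\tilde{\pi}^+$. The indicator functions make the case distinction ($u>p_1$ versus $u\le p_1$, and likewise in $v$) automatic, so no separate sub-cases are needed.

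A couple of bookkeeping points deserve attention. I would record that the denominators $p-p_1$ and $q-q_1$ do not vanish when the block is nonempty: if $q_1=q$ then some row $u\le p_1$ of $\pi^*$ is identically zero, forcing $\mu_u=0$ against~\eqref{eq:ordre_marges}, and symmetrically $p_1<p$; the degenerate empty-block case $p_1=q_1=0$ simply returns $\pi^*=\pi^+$ by Proposition~\ref{prop:solution_pi_plus}, consistently with Definition~\ref{def:pi_tilde}. One could, as an alternative route, bypass Proposition~\ref{prop:rectangle} altogether and instead verify directly that $\tilde{\pi}^+$ together with the multiplier $r_{u,v}$ of Remark~\ref{rem:r_uv} satisfies all the KKT conditions of Problem~\ref{pb:chi2_constraints}, and then invoke Proposition~\ref{prop:kkt}; there, nonnegativity of $r$ reduces by the monotonicity noted in Remark~\ref{rem:r_uv} to the single inequality $r_{p_1,q_1}\ge 0$, and nonnegativity of $\tilde{\pi}^+$ off the block follows from a monotonicity argument identical to that of Proposition~\ref{prop:croissance_pi_star}. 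I expect no real difficulty beyond this: the substance of the proof is the careful tracking of the three correction terms and their indicators so that the combination $[\mathds{1}_{u\le p_1}+\mathds{1}_{v\le q_1}-1]$ emerges exactly, plus the short argument that the normalizing denominators are never degenerate.
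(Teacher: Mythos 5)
Your proposal is correct and follows essentially the same route as the paper, whose proof simply states that the theorem is a direct rewriting of Proposition~\ref{prop:forme_pi_star} (under Assumption~\eqref{cond:rectangle}, i.e.\ Proposition~\ref{prop:rectangle}) with the explicit multipliers of Proposition~\ref{prop:calcul_r} substituted in. You merely carry out that substitution explicitly—correctly recovering the indicator combination $[\mathds{1}_{u\le p_1}+\mathds{1}_{v\le q_1}-1]$—and add harmless bookkeeping (nondegenerate denominators, the empty-block case) that the paper omits.
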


\begin{proof}
This result is a direct rewriting of Proposition~\ref{prop:forme_pi_star}, with the incorporation of explicit expressions of aggregations of $r$ given by Proposition~\ref{prop:calcul_r}.
\end{proof}

\subsection{Structure of the rectangle-solution \label{ssec:analyse_pi_tilde}}

We analyze the structure of the rectangle-solution $\tilde{\pi}^+$. The domain $\{1,\ldots,p\}\times\{1,\ldots,q\}$ decomposes into four blocks induced by the corner $(p_1,q_1)$.

\subsubsection{Upper-left rectangle}\label{sssec:rectangle_hg}

In the rectangle $1\ldots p_1 \times 1\ldots q_1$, the distribution $\tilde{\pi}^+$ is identically zero.
This is the region where $\pi^+$ takes its smallest values and, when Condition~\eqref{cond:H} fails, may be negative.
The transformation $T$ sets these entries to zero and removes a total mass $\Delta$ defined in Equation~\eqref{def:Delta}. We show that $\Delta<0$.

\begin{lemma}\label{lem:Delta_négatif}
The total mass $\Delta$ defined in Equation~\eqref{def:Delta} is negative.
\end{lemma}

\begin{proof}
Under the ordering of the marginals given by Equation~\eqref{eq:ordre_marges}, the array $\pi^+$ is monotone in the sense of Definition~\ref{def:monotone}.
Using Definition~\ref{def:perte_masse}, $\Delta_{u,\cdot}$ is increasing in $u$. Assume by contradiction that $\Delta>0$.
Since
\[
\Delta = \sum_{u=1}^{p} \Delta_{u,\cdot} = \sum_{u=1}^{p_1} \Delta_{u,\cdot},
\]
it follows that $\Delta_{p_1,\cdot}>0$. Then Equation~\eqref{eq:Ru} implies that $R_{p_1,\cdot}<0$, which is impossible.
\end{proof}

\subsubsection{Upper-right rectangle}\label{sssec:rectangle_hd}

In the rectangle $1\ldots p_1 \times q_1+1\ldots q$, $\tilde{\pi}^+$ is given by
\begin{equation}\label{eq:pi_tilde_hd}
\tilde{\pi}^+_{u,v} = \pi^+_{u,v} + \frac{\sum_{j=1}^{q_1}\pi^+_{u,j}}{q-q_1}
= \pi^+_{u,v} + \frac{\Delta_{u,\cdot}}{q-q_1}.
\end{equation}
Here, $\Delta_{u,\cdot}$ is the mass removed from row $u$ by truncating the upper-left rectangle.
Moreover,
\[
\sum_{u=1}^{p_1}\Delta_{u,\cdot} = \sum_{v=1}^{q_1}\Delta_{\cdot,v} = \Delta.
\]
Thus, Equation~\eqref{eq:pi_tilde_hd} shows that $T$ redistributes $\Delta$ row-wise over the $q-q_1$ non-zero entries of each row. Summing $\tilde{\pi}^+$ over the upper-right rectangle yields the total mass of $\pi^+$ on this region plus $\Delta$.

\subsubsection{Lower-left rectangle}\label{sssec:rectangle_bg}

In the rectangle $p_1+1 \ldots p \times 1 \ldots q_1$, $\tilde{\pi}^+$ is given by
\begin{equation}\label{eq:pi_tilde_bg}
\tilde{\pi}^+_{u,v} = \pi^+_{u,v} + \frac{\sum_{i=1}^{p_1}\pi^+_{i,v}}{p-p_1}
= \pi^+_{u,v} + \frac{\Delta_{\cdot,v}}{p-p_1}.
\end{equation}
Symmetrically to Subsection~\ref{sssec:rectangle_hd}, $\Delta$ is redistributed column-wise over the $p-p_1$ non-zero rows of $\tilde{\pi}^+$. Summing $\tilde{\pi}^+$ over the lower-left rectangle yields the total mass of $\pi^+$ on this region plus $\Delta$.

\subsubsection{Lower-right rectangle}\label{sssec:rectangle_bd}

In the rectangle $p_1+1 \ldots p \times q_1+1 \ldots q$, $\tilde{\pi}^+$ is given by
\begin{equation}\label{eq:pi_tilde_bd}
\tilde{\pi}^+_{u,v}
= \pi^+_{u,v}
- \frac{\sum_{u=1}^{p_1} \sum_{v=1}^{q_1} \pi^+_{u,v}}{(p-p_1)(q-q_1)}
= \pi^+_{u,v} - \frac{\Delta}{(p-p_1)(q-q_1)}.
\end{equation}
The two rectangles in Subsections~\ref{sssec:rectangle_hd} and~\ref{sssec:rectangle_bg} contribute a total mass shift of $2\Delta$, which is negative by Lemma~\ref{lem:Delta_négatif}. Equation~\eqref{eq:pi_tilde_bd} compensates this deficit by uniformly adding $|\Delta|$ over the $(p-p_1)(q-q_1)$ cells of the lower-right rectangle, which correspond to the largest values of $\pi^+$.

\subsection{Identifying the corner of the zero rectangle}
As noted at the end of Section~\ref{sec:forme}, identifying $\pi^*$ requires (i) an explicit expression of $r$ and (ii) the corner of the zero pattern. The rectangle-solution $\tilde{\pi}$ (Definition~\ref{def:pi_tilde}) and the transformation $T$ therefore require specifying $(p_1,q_1)$. Under Assumption~\eqref{cond:rectangle}, we show that $(p_1,q_1)$ can be determined explicitly. Moreover, Assumption~\eqref{cond:H} holds if and only if $p_1=q_1=0$ (Proposition~\ref{prop:solution_pi_plus}).

In this sub-section, we therefore focus on the complementary case in which Assumption~\eqref{cond:rectangle} holds, but the stronger Assumption~\eqref{cond:H} does not.

Let us focus on the form of $\pi^*$ on the upper-right rectangle given by Equation~\eqref{eq:pi_tilde_hd}, for $u=1$:
\[
\pi^*_{1,v} = \pi^+_{1,v} + \frac{\Delta_1}{q-q_1}.
\]
We define the first column for which this expression becomes non-negative by
\begin{equation}\label{eq:q_1}
\hat{q}_1 = \min\left\{v' ~\middle|~ \pi^+_{1,v'+1} + \frac{\pi^+[1,:v']}{q-v_0} \geq 0 \right\}.
\end{equation}
For $v'=q-1$, this quantity equals $\mu_1$, hence the set is non-empty and the minimum is well defined. Symmetrically, considering the first column and the form given by Equation~\eqref{eq:pi_tilde_bg}, we define the first row yielding a non-negative value:
\begin{equation}\label{eq:p_1}
\hat{p}_1 = \min\left\{u' ~\middle|~ \pi^+_{u'+1,1} + \frac{\pi^+[:u',1]}{p-u_0} \geq 0 \right\}.
\end{equation}

Our goal is to show that $(p_1,q_1)=(\hat{p}_1,\hat{q}_1)$. We establish this by constructing $(r,\pi)$ that satisfies the KKT conditions with this corner.

To this end, we define
\begin{equation}\label{eq:indice_u_v}
q(u) := \max\{v ~|~ \pi^+_{u,v} \leq 0\}
\quad \text{and} \quad
p(v) := \max\{u ~|~ \pi^+_{u,v} \leq 0\}.
\end{equation}
and we show a first inequality:
\begin{lemma}\label{lem:rectangle_contient_négatif}
We necessarily have
\[
p_1 \geq \hat{p}_1 \geq p(1),
\]
and symmetrically,
\[
q_1 \geq \hat{q}_1 \geq q(1).
\]
\end{lemma}

\begin{proof}
Since the entries $\pi_{u,v}$ in the lower-left rectangle described in Subsection~\ref{sssec:rectangle_bg} are positive, it follows that $p_1 \geq \hat{p}_1$.

Moreover, for any $v' < p(1)$, the sum appearing in Equation~\eqref{eq:p_1} is strictly negative, which implies that such $v'$ does not belong to the admissible set. This establishes the second inequality.
\end{proof}
Lemma~\ref{lem:rectangle_contient_négatif} helps understanding the action of $T$ since it proves in particular that the zero rectangle of $\pi^*$ contains all indices $(u,v)$ such that $\pi^+_{u,v} \le 0$.
Before proving that the corner is indeed $(p_1,q_1) = (\hat{p}_1,\hat{q}_1)$, we verify that the solution defined in this way is non-negative.

\begin{lemma}\label{lem:eligibilité_pi_tilde}
If we apply $T$ defined in Definition~\ref{def:pi_tilde}, with $(p_1,q_1) = (\hat{p}_1,\hat{q}_1)$, the resulting $T(\pi)^+$ is non-negative.
\end{lemma}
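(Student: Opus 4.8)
The plan is to verify non-negativity of $\tilde{\pi}^+$ region by region, using the four-block decomposition of Subsection~\ref{ssec:analyse_pi_tilde} together with the monotonicity of $\pi^+$ (Proposition~\ref{prop:croissance_pi_star} applied to $\pi^+$) and the definitions of $\hat{p}_1,\hat{q}_1$ in Equations~\eqref{eq:q_1}--\eqref{eq:p_1}. The upper-left block is identically zero, so there is nothing to check there. For the three remaining blocks, the strategy is to reduce each case to the ``worst'' cell --- the one with the smallest value --- and show that this extremal cell is already non-negative by construction of the corner.

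First I would handle the upper-right block (Subsection~\ref{sssec:rectangle_hd}), where $\tilde{\pi}^+_{u,v} = \pi^+_{u,v} + \Delta_{u,\cdot}/(q-q_1)$. By monotonicity of $\pi^+$, for fixed $u$ the minimum over $v \ge q_1+1$ is attained at $v=q_1+1$; and one checks that $\Delta_{u,\cdot}/(q-q_1) = \pi^+[u,:q_1]/(q-q_1)$ is itself monotone in $u$ (it is a sum of terms increasing in $u$), so the overall minimum of the block is attained at $(u,v)=(1,q_1+1)$, i.e.\ at $\pi^+_{1,q_1+1} + \pi^+[1,:q_1]/(q-q_1)$. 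This is exactly the quantity that the definition~\eqref{eq:q_1} of $\hat{q}_1$ forces to be non-negative (with $q_1=\hat q_1$, $v'=\hat q_1$). The lower-left block (Subsection~\ref{sssec:rectangle_bg}) is handled symmetrically, the extremal cell being $(\hat p_1+1,1)$, whose non-negativity is precisely the defining property~\eqref{eq:p_1} of $\hat p_1$.

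The lower-right block (Subsection~\ref{sssec:rectangle_bd}), where $\tilde{\pi}^+_{u,v} = \pi^+_{u,v} - \Delta/\big((p-p_1)(q-q_1)\big)$, is the one I expect to be the main obstacle, because here a negative quantity ($\Delta<0$ by Proposition~\ref{prop:Delta_négatif}, so $-\Delta>0$) is \emph{added} to $\pi^+_{u,v}$ --- which superficially looks harmless, but one must make sure $\pi^+_{u,v}$ itself is non-negative on this block, not merely that the correction has a favourable sign. The key point is that Proposition~\ref{prop:rectangle_contient_négatif} (with $(p_1,q_1)=(\hat p_1,\hat q_1)$) guarantees that every index with $\pi^+_{u,v}\le 0$ lies inside the zero rectangle $\{u\le p_1,\ v\le q_1\}$; hence on the lower-right block $u\ge p_1+1$, $v\ge q_1+1$ we have $\pi^+_{u,v}>0$, and adding the positive quantity $-\Delta/\big((p-p_1)(q-q_1)\big)$ keeps it positive. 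I would state this step explicitly, invoking the remark following Proposition~\ref{prop:rectangle_contient_négatif}. Assembling the four cases then gives $\tilde{\pi}^+\ge 0$ everywhere, which is the claim.
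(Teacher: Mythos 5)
Your proposal is correct and follows essentially the same route as the paper's proof: a four-block case analysis in which the upper-right and lower-left blocks are reduced to the extremal cells whose non-negativity is exactly the defining property of $\hat{q}_1$ and $\hat{p}_1$ (via monotonicity of $\pi^+$ and of $\Delta_{u,\cdot}$), and the lower-right block uses the fact that all non-positive entries of $\pi^+$ lie inside the zero rectangle together with $\Delta<0$. No gaps; your explicit flagging of the lower-right block as the place where positivity of $\pi^+$ itself must be checked matches the paper's use of $u\ge p_1\ge p(1)$ and $v\ge q_1\ge q(1)$.
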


\begin{proof}
Let $\tilde{\pi} = T(\pi^+)$ with $(p_1,q_1) = (\hat{p}_1,\hat{q}_1)$. We examine the four regions described in Subsection~\ref{ssec:analyse_pi_tilde}.

First, consider $(u,v)$ such that $1\le u\le p_1$ and $q_1+1\le v\le q$:
\[
\tilde{\pi}^+_{u,v}
= \pi^+_{u,v} + \frac{\Delta_u}{q-q_1}
\geq \pi^+_{1,q_1+1} + \frac{\Delta_u}{q-q_1}
\geq \pi^+_{1,q_1+1} + \frac{\Delta_1}{q-q_1}
\geq 0,
\]
by definition of $\hat{q}_1=q_1$.  
The argument is symmetric for $p_1+1\le u\le p$ and $1\le v\le q_1$.

If $1\le u\le p_1$ and $1\le v\le q_1$, then $\tilde{\pi}^+_{u,v}=0$ and is therefore non-negative.

Finally, when $u\ge p_1+1$ and $v\ge q_1+1$, we have $\pi^+_{u,v}\ge 0$ since
$u\ge p_1 \ge p(1)\ge p(u)$ and $v\ge q_1 \ge q(1)\ge q(v)$.  
As $\Delta$ is negative by Lemma~\ref{lem:Delta_négatif}, the form given in Subsection~\ref{sssec:rectangle_bd} ensures that $\tilde{\pi}^+_{u,v}$ is positive.
\end{proof}

\begin{prop}\label{prop:solution_rectangle}
Assume~\eqref{cond:rectangle}. Then necessarily,
\[
(p_1,q_1) = (\hat{p}_1,\hat{q}_1).
\]
\end{prop}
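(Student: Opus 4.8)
The plan is to prove that the explicit construction of Definition~\ref{def:pi_tilde} built with corner $(\hat p_1,\hat q_1)$ is itself the minimizer $\pi^*$, and then to read off that the two corners coincide. Denote this construction by $\rho$. By Proposition~\ref{prop:rectangle_contient_négatif} we already know $p_1\ge\hat p_1$ and $q_1\ge\hat q_1$, so it suffices to obtain the reverse inequalities, and these will drop out once $\rho=\pi^*$.

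First I would record that $\rho$ is admissible in a strong sense: by Lemma~\ref{lem:eligibilité_pi_tilde} it is nonnegative, by the region-by-region bookkeeping of Subsection~\ref{ssec:analyse_pi_tilde} (the $\Delta$-redistribution) it satisfies all marginal and normalization constraints, hence $\rho\in\mathcal{S}_{\mu,\nu}$, and by construction it vanishes on the block $B:=\{1,\dots,\hat p_1\}\times\{1,\dots,\hat q_1\}$. Let $\mathcal{C}$ be the set of couplings of $\mathcal{S}_{\mu,\nu}$ that vanish on $B$. Then $\rho\in\mathcal{C}$; and, crucially, $\pi^*\in\mathcal{C}$ as well, because under Assumption~\eqref{cond:rectangle} the zero block of $\pi^*$ is $\{1,\dots,p_1\}\times\{1,\dots,q_1\}$, which contains $B$ by Proposition~\ref{prop:rectangle_contient_négatif}.

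The core step is to show $\rho$ minimizes $IC$ over $\mathcal{C}$. The restricted problem is again quadratic with affine constraints, so (exactly as in Proposition~\ref{prop:kkt}) its KKT conditions are necessary and sufficient. For $\rho$ they hold with the multipliers coming from the Lagrangian derivation of Proposition~\ref{prop:forme_pi_star}: off $B$, the construction $\rho_{u,v}$ was designed to have precisely the affine form $\lambda_u+\omega_v+\theta$, so one may take $r_{u,v}=0$ there (if $\rho_{u,v}>0$ this is forced by complementary slackness; if $\rho_{u,v}=0$ the stationarity identity still yields $r_{u,v}=\rho_{u,v}-(\lambda_u+\omega_v+\theta)=0\ge 0$); on $B$ the nonnegativity constraint is subsumed by the equality $\pi_{u,v}=0$, whose multiplier absorbs the remaining slack, so no sign condition on $r$ is needed there. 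Hence $\rho$ solves the restricted problem. Combining, $IC(\rho)\le IC(\pi^*)$ because $\pi^*\in\mathcal{C}$, while $IC(\pi^*)\le IC(\rho)$ because $\pi^*$ minimizes $IC$ over all of $\mathcal{S}_{\mu,\nu}\supseteq\mathcal{C}$; therefore $IC(\rho)=IC(\pi^*)$, and since $IC$ is strictly convex its minimizer is unique, so $\rho=\pi^*$. Finally, $\rho$ vanishes on $B$ and is strictly positive off $B$, while under Assumption~\eqref{cond:rectangle} $\pi^*$ vanishes exactly on $\{1,\dots,p_1\}\times\{1,\dots,q_1\}$; equating the two zero sets gives $(p_1,q_1)=(\hat p_1,\hat q_1)$.

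I expect the delicate point to be this last one: checking that $\rho$ is \emph{strictly} positive just outside $B$ (equivalently, that its zero set is not strictly larger than $B$). This is where the "smallest index making the boundary entry of row~$1$, resp.\ column~$1$, nonnegative" nature of $\hat q_1$, resp.\ $\hat p_1$, in~\eqref{eq:q_1}--\eqref{eq:p_1}, together with the monotonicity of $\pi^+$ from Proposition~\ref{prop:croissance_pi_star}, has to be used: the minimality forces the adjacent boundary entries to be nonnegative while the preceding ones are negative, which pins the zero set down to $B$. A tie in the margins that makes a boundary entry exactly zero would require a short separate treatment, but it is non-generic and does not change the structure. A second, purely routine, point is verifying that the $\Delta$-terms in Definition~\ref{def:pi_tilde} are exactly the redistribution needed for $\rho$ to meet the marginal constraints.
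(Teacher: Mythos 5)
Your route is genuinely different from the paper's, and its main body is sound. The paper stays inside the KKT system of the unrestricted Problem~\ref{pb:chi2_constraints}: it treats the corner $(p_1,q_1)$ of the assumed zero rectangle as an unknown, notes that the construction satisfies all KKT relations for any corner except the nonnegativity of $\pi$ and of $r$, gets $p_1\ge\hat p_1$ and $q_1\ge\hat q_1$ from Proposition~\ref{prop:rectangle_contient_négatif}, and then excludes larger corners by a monotonicity argument on the aggregated multipliers $R_{u,\cdot}$ of Equation~\eqref{eq:Ru} (enlarging the corner beyond $\hat q_1$ only adds positive entries of $\pi^+$ and pushes $R_{u,\cdot}$ down, so nonnegativity of $r$ forces $q_1\le\hat q_1$). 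You instead show that the explicit construction $\rho$ with corner $(\hat p_1,\hat q_1)$ solves the \emph{restricted} problem over couplings vanishing on $B$ (where the multipliers attached to the equalities on $B$ are sign-free, so no condition on $r$ is needed there), observe that $\pi^*$ is feasible for that restricted problem by Proposition~\ref{prop:rectangle_contient_négatif}, and conclude $\rho=\pi^*$ from the sandwich $IC(\rho)\le IC(\pi^*)\le IC(\rho)$ plus strict convexity of $IC$. This is an attractive variational shortcut: it delivers $\pi^*=\tilde\pi^+$ (essentially Theorem~\ref{th:forme_pi_tilde}) without ever checking $r\ge 0$ on $B$ and without comparing corners; the paper's multiplier argument, in exchange, pins the corner directly without inspecting the zero set of the candidate.

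The unfinished point is exactly the one you flag, and it is where your argument currently falls short of the stated equality. Reading the corner off the zero set of $\rho$ requires the strict inequalities $\rho_{1,\hat q_1+1}>0$ and $\rho_{\hat p_1+1,1}>0$, whereas the definitions \eqref{eq:q_1}--\eqref{eq:p_1} only give these entries as $\ge 0$; in a tie the zero set of $\rho=\pi^*$ can strictly exceed $B$, and then $(p_1,q_1)$ as defined in Definition~\ref{def:pq_max_null} would differ from $(\hat p_1,\hat q_1)$, so "equating zero sets" does not conclude. Declaring the tie non-generic is not a proof of a statement asserted for all admissible marginals; you need either a direct treatment of the equality case (for instance showing that the constructions with corner $(\hat p_1,\hat q_1)$ and with the enlarged corner then coincide, or that the enlarged corner violates $r\ge0$), or simply to import the paper's $R_{u,\cdot}$-monotonicity argument to rule out $q_1>\hat q_1$. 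Apart from this boundary case (on which the paper's own write-up is admittedly also terse) and the routine check that $\rho$ meets the marginal constraints, your proposal is correct.
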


\begin{proof}
By Proposition~\ref{prop:kkt}, any solution of the KKT system yields a solution $\pi^*$ of Problem~\ref{pb:chi2_constraints}.  
All KKT constraints are satisfied regardless of the corner chosen, except for the non-negativity of $\pi^*$ and of $r$.

If $(p_1,q_1) = (\hat{p}_1,\hat{q}_1)$, the non-negativity of $\tilde{\pi} = T(\pi)$ is ensured by Lemma~\ref{lem:eligibilité_pi_tilde}.

It remains to show that $r$ is non-negative. We show that if $(\hat{p}_1,\hat{q}_1)$ does not produce a non-negative $r$, then no corner does. The argument is presented for $q_1$ and is symmetric for $p_1$.

\paragraph{Showing that $q_1 \le \hat{q}_1$.}
Recall the form of $R_{u,\cdot}$ given by Equation~\eqref{eq:Ru}:
\[
R_{u,\cdot}
= -\frac{q}{q-q_1}\left[\Delta_{u,\cdot} + \frac{\Delta}{p-p_1}\right]\mathds{1}_{u\le p_1}.
\]
For $v$ beyond $\hat{q}_1$, $\pi^+_{1,v}$ is positive (Lemma~\ref{lem:rectangle_contient_négatif}) and therefore$\pi^+_{u,v}$ for all $u$ by monotonicity of $\pi^+$.
Increasing $q_1$ beyond $\hat{q}_1$ therefore decreases $R_{u,\cdot}$ which must be non-negative. If this condition fails for $q_1=\hat{q}_1$, it also fails for any $q_1>\hat{q}_1$.

\paragraph{Showing that $q_1 \ge \hat{q}_1$.}
This follows directly from Lemma~\ref{lem:rectangle_contient_négatif}.

In conclusion, under Assumption~\eqref{cond:rectangle}, the corner of the rectangle-solution must be $(\hat{p}_1,\hat{q}_1)$.
\end{proof}

At this stage, given marginals $\mu$ and $\nu$, we can compute $\hat{p}_1$ and $\hat{q}_1$ and construct $\tilde{\pi}^+$ and $r$. If $r\ge 0$ and $\tilde{\pi}^+\ge 0$, then $\tilde{\pi}^+ = \pi^*$. The following subsection provides an example where Condition~\eqref{cond:H} fails while Condition~\eqref{cond:rectangle} yields an explicit solution.

\subsection{A newly covered example}\label{ssec:exemple}

Consider marginals $\mu=(0.1,0.2,0.3,0.4)$ and $\nu=(0.1,0.3,0.6)$. The corresponding additive coupling is
\[
\pi^+ = \frac{1}{120}
\begin{pmatrix}
-3 & 3 & 12 \\
1 & 7 & 16\\
5 & 11 & 20\\
9 & 15 & 24
\end{pmatrix}.
\]

Condition~\eqref{cond:H} fails since $\pi^+$ has negative entries. Nevertheless, the exact optimizer can still be constructed.
Here, $\hat{p}_1=1$ and $\hat{q}_1=1$, so that $\Delta^1_{1,\cdot}=\Delta^1_{\cdot,1}=-\frac{3}{120}$. Zeroing out the entry $\pi^+_{1,1}$ yields
\[
\tilde{\pi}^+ = \frac{1}{120}
\begin{pmatrix}
0 & 1.5 & 10.5\\
0 & 7.5 & 16.5\\
4 & 11.5 & 20.5\\
8 & 15.5 & 24.5
\end{pmatrix}.
\]
The associated multiplier matrix $r$ (Remark~\ref{rem:r_uv}) is
\[
r = \frac{1}{120}
\begin{pmatrix}
6 & 0 & 0\\
0 & 0 & 0\\
0 & 0 & 0\\
0 & 0 & 0
\end{pmatrix}.
\]
Since $\tilde{\pi}^+\ge 0$ and $r\ge 0$, we obtain $\pi^*=\tilde{\pi}^+$.
Finally,
\[
IC\big(\tilde{\pi}^+\big)=\frac{1914}{120^2}
\quad\text{whereas}\quad
IC(\pi^+)=\frac{1896}{120^2},
\]
so the latter value is unattainable under the non-negativity constraint.

\section{Solution in the general case\label{sec:solution_generale}}
The rectangle-solution (Definition~\ref{def:pi_tilde}) provides the basic mass-redistribution step. We generalize the transformation $T$ and apply it iteratively starting from $\pi^+$. The resulting algorithm terminates in at most $p$ iterations and returns the optimizer $\pi^*$ (Theorem~\ref{th:general}).

To generalize Equation~\eqref{eq:q_1}, for a given row $u$ and a matrix $m$, define
\begin{equation*}
I_u(m) = \left\{v' ~\middle|~ m[u,v'+1] + \frac{m[u,:v']}{q-v'} \geq 0 \right\}.
\end{equation*}

\begin{prop}[Definition of $q_u$]\label{prop:def_q}
If $m$ is increasing and defines positive marginals, then $I_u(m)$ is an interval of the form $[q_u(m); q-1]$, where $q_u(m):=\min I_u(m)$.
\end{prop}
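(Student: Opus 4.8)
The plan is to show that $I_u(m)$ is a nonempty ``suffix'' of the index set $\{0,1,\ldots,q-1\}$: first that $q-1\in I_u(m)$, so that $q_u(m):=\min I_u(m)$ is well defined and $\le q-1$; and second that $I_u(m)$ is closed under the map $v'\mapsto v'+1$ for $v'\le q-2$. These two facts together force $I_u(m)=\{q_u(m),q_u(m)+1,\ldots,q-1\}$, which is the asserted interval $[q_u(m);q-1]$.

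For the base point, I would simply evaluate the condition defining $I_u(m)$ at $v'=q-1$. There the denominator $q-v'$ equals $1$, so the left-hand side collapses to $m_{u,q}+m[u,:q-1]=m[u,:q]$, i.e.\ exactly the $u$-th row marginal of $m$. Since $m$ is assumed to define positive marginals, this is strictly positive, hence $q-1\in I_u(m)$ and $I_u(m)\neq\emptyset$.

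For the propagation step, fix $v'\in I_u(m)$ with $v'\le q-2$; the goal is $v'+1\in I_u(m)$. The key observation is that, because $q-v'>0$, membership ``$v'\in I_u(m)$'' is equivalent to the affine inequality $(q-v')\,m_{u,v'+1}+m[u,:v']\ge 0$. Using $m[u,:v'+1]=m[u,:v']+m_{u,v'+1}$ together with the monotonicity of $m$ along row $u$, namely $m_{u,v'+2}\ge m_{u,v'+1}$, one computes
\[
\bigl((q-v'-1)\,m_{u,v'+2}+m[u,:v'+1]\bigr)-\bigl((q-v')\,m_{u,v'+1}+m[u,:v']\bigr)=(q-v'-1)\,\bigl(m_{u,v'+2}-m_{u,v'+1}\bigr)\ge 0,
\]
so $(q-v'-1)\,m_{u,v'+2}+m[u,:v'+1]\ge 0$; dividing by $q-v'-1>0$ (which holds since $v'\le q-2$) gives precisely the inequality defining membership of $v'+1$ in $I_u(m)$. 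Iterating this from $q_u(m)$ up to $q-1$ then yields the interval structure, and taking $m=\pi^+$ (which is monotone and has positive marginals) recovers $\hat q_1$ of Equation~\eqref{eq:q_1} as the special case $q_1(\pi^+)$.

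I do not expect a genuine obstacle here: the only point requiring a moment's thought is to clear the denominator $q-v'$ — legitimate since it is strictly positive on the relevant range — so as to turn the defining condition into a linear inequality in the entries of row $u$, after which monotonicity along the row does all the work. The positivity-of-marginals hypothesis is used only once, to anchor the interval at its right endpoint $v'=q-1$.
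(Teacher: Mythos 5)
Your proposal is correct and follows essentially the same route as the paper: anchor the set at $v'=q-1$ via the positive row marginal, then use monotonicity of $m$ along row $u$ to propagate membership to larger indices (the paper jumps directly from $v$ to any $v'\ge v$ in one chain of inequalities, while you do a one-step induction, but the computation is the same after clearing the positive denominator). No gap to report.
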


\begin{proof}
Fix an arbitrary $1\le u\le p$.
Since marginal $m_{u,\cdot}$ is positive, we immediately have $q-1\in I_u(m)$.
Now assume that $v\in I_u(m)$ and show that any $v'\ge v$ also belongs to $I_u(m)$:
\begin{align*}
&(q-v')\,m[u,v'+1] + m[u,:v']\\
&= (q-v')\,m[u,v'+1] + m[u,:v] + m[u,v+1:v']\\
&\ge (q-v')\,m[u,v+1] + m[u,:v] + (v'-v)m[u,v+1] \quad \text{(by monotonicity of $m$)}\\
&= (q-v)\,m[u,v+1] + m[u,:v] \ge 0 \quad \text{(since $v\in I_u(m)$)}.
\end{align*}
\end{proof}

\begin{lemma}[Monotonicity of $q_u$]\label{lem:croissance_q}
Assume $m$ is monotone. The sequence $q_u(m)$ is non-increasing in $u$.
\end{lemma}

\begin{proof}
Fix $u'\ge u$ and any $v$. Then
\[
(q-v)\,m[u',v+1] + m[u',:v]
\;\ge\;
(q-v)\,m[u,v+1] + m[u,:v].
\]
Hence, if the expression corresponding to row $u$ is non-negative, the one corresponding to row $u'$ is also non-negative.
\end{proof}

\begin{rem}
In particular, the two previous propositions show that if $\pi^+$ is a coupling of two marginals ordered according to Equation~\eqref{eq:ordre_marges}, then $q_u(\pi^+)$ is a non-increasing sequence of integers bounded above by $q-1$.
\end{rem}

Algorithm~\ref{alg:one_step_general} implements a one-step generalization of $T$. It redistributes the negative mass of a single row $l$ toward larger indices while preserving the marginals, and simultaneously updates the Lagrange multiplier matrix $r$.

\begin{algorithm}[h]
\caption{Construction of row $l$ of $\pi^*$; inputs $(m,l,r)$}
\label{alg:one_step_general}
\begin{algorithmic}
\STATE $(p,q) \leftarrow \text{Dimensions}(m)$
\STATE $\tilde{m} \leftarrow m$
\STATE $\tilde{r} \leftarrow r$
\IF{$\min\{m[l,v],~1\le v\le q\} \ge 0$}
\RETURN $\tilde{m},\tilde{r}$
\ELSE
\STATE $q_l \leftarrow q_l(m)$
\STATE \COMMENT{mass loss computation}
\STATE $\Delta^l \leftarrow m[l,:q_l]$
\STATE $\forall (1\le v\le q_l),~ \Delta^l_v \leftarrow m[l,v]$
\STATE \COMMENT{update of $\tilde{m}$}
\STATE $\tilde{m}[l,v] \leftarrow 0 \quad \forall (1\le v\le q_l)$ \COMMENT{upper-left rectangle}
\STATE $\tilde{m}[l,v] \leftarrow \tilde{m}[l,v] + \frac{\Delta^l}{q-q_l} \quad \forall (q_l+1\le v\le q)$ \COMMENT{upper-right rectangle}
\STATE $\tilde{m}[u,v] \leftarrow \tilde{m}[u,v] + \frac{\Delta^l_v}{p-l} \quad \forall (l+1\le u\le p,\; 1\le v\le q_l)$ \COMMENT{lower-left rectangle}
\STATE $\tilde{m}[u,v] \leftarrow \tilde{m}[u,v] - \frac{\Delta^l}{(p-l)(q-q_l)} \quad \forall (l+1\le u\le p,\; q_l+1\le v\le q)$ \COMMENT{lower-right rectangle}
\STATE \COMMENT{update of $\tilde{r}$}
\STATE $\tilde{r}[u,v] \leftarrow r[u,v] - \frac{\Delta^l_v}{p-l} - \frac{\Delta^l}{(p-l)(q-q_l)} \quad \forall (1\le u\le l-1,\; 1\le v\le q_l)$
\STATE $\tilde{r}[l,v] \leftarrow -m[l,v] - \frac{\Delta^l}{q-q_l} - \frac{\Delta^l_v}{p-l} - \frac{\Delta^l}{(p-l)(q-q_l)} \quad \forall (1\le v\le q_l)$
\ENDIF
\RETURN $\tilde{m},\tilde{r}$
\end{algorithmic}
\end{algorithm}

Algorithm~\ref{alg:one_step_general} removes the negative part of a single row $l$ of a monotone matrix $m$ while preserving the marginals. Iterating over $l$ yields the global procedure. We first establish a set of invariants.

\begin{lemma}[Invariants of Algorithm~\ref{alg:one_step_general}]
\label{lem:predicats}

Let $(m,r,l)$ denote the inputs of Algorithm~\ref{alg:one_step_general}, and $(\tilde{m},\tilde{r})$ its outputs.
We assume that the following properties hold for the inputs:

\begin{enumerate}
\item $m$ is non-decreasing in $v$ \label{predicat_croissance_v}
\item $m$ is non-decreasing in $u$ for $u \ge l$ \label{predicat_croissance_u}
\item $m$ has marginals $\mu$ and $\nu$ respecting condition~\eqref{eq:ordre_marges} \label{predicat_marges+}
\item the first $l-1$ rows of $m$ are non-negative \label{predicat_m+}
\item the rectangle of corner $(m-1,q_l(m))$ is null as illustrated in Figure~\ref{fig:pi_star}: \label{predicat_rectangle}
\[
\forall\, 1 \le u \le l-1,\ \forall\, 1 \le v \le q_l(m), \quad m_{u,v} = 0
\]
\item the sequence $q_u(m)$ is non-increasing for $u\geq l$\label{predicat_qdecroit}
\item the matrix $r$ is constructed progressively, row by row: \label{predicat_r0}
\[
\forall\, l \le u \le p,\ \forall\, v,\quad r_{u,v} = 0
\]
\item the matrix $r$ is non-negative \label{predicat_r+}
\item there exist real numbers $\{\lambda_u\}_{1\le u\le p}$, $\{\omega_v\}_{1\le v\le q}$, and $\theta$ such that for all $(u,v)$: \label{predicat_gradient}
either $r_{u,v}=0$, or $r_{u,v}$ has the form prescribed by Equation~\eqref{eq:gradient_lagrangien}, which nullifies the gradient of the Lagrangian,
\begin{equation*}
r_{u,v} = -\lambda_u - \omega_v - \theta,
\end{equation*}
and similarly either $m_{u,v}=0$, or
\begin{equation*}
m_{u,v} = \lambda_u + \omega_v + \theta.
\end{equation*}
moreover, complementary slackness holds:
\[
r_{u,v}\, m_{u,v} = 0.
\]
\end{enumerate}

Then these assumptions are preserved by one iteration of Algorithm~\ref{alg:one_step_general}. More precisely, after replacing $l$ by $l+1$, $m$ by $\tilde{m}$, and $r$ by $\tilde{r}$, the same properties remain satisfied.
\end{lemma}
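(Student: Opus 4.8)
The plan is to verify each of the nine invariants in turn for the outputs $(\tilde m,\tilde r)$ with the row counter advanced to $l+1$, using as hypotheses the same nine invariants for the inputs $(m,r,l)$. The proof naturally splits on the branch taken by the algorithm. If the test $\min\{m[l,v]\}\ge 0$ succeeds, then $\tilde m=m$ and $\tilde r=r$, and every invariant with $l$ replaced by $l+1$ follows almost immediately: invariant~\ref{predicat_m+} now includes row $l$ because that row is non-negative (that is exactly the branch condition), invariant~\ref{predicat_rectangle} becomes vacuous or shrinks because $q_l(m)$ drops (row $l$ being non-negative forces its zero-prefix to be empty, or at least the rectangle can only narrow), invariant~\ref{predicat_r0} gains a row for free, and invariants~\ref{predicat_croissance_v}, \ref{predicat_marges+}, \ref{predicat_qdecroit}, \ref{predicat_r+}, \ref{predicat_gradient} are unchanged since nothing was modified. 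Invariant~\ref{predicat_croissance_u} for $u\ge l+1$ is weaker than the hypothesis for $u\ge l$, so it is inherited. So the substantive work is entirely in the \textbf{else} branch.

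In the else branch I would proceed invariant by invariant. For~\ref{predicat_croissance_v} (monotonicity in $v$): the upper-left block of row $l$ is set to $0$ and the upper-right block has a constant $\Delta^l/(q-q_l)$ added; since $\Delta^l<0$ the row-$l$ entries just after $q_l$ are shifted down but one must check they stay above the zeros to their left — this is precisely the definition of $q_l(m)=\min I_l(m)$ via Proposition~\ref{prop:def_q}, which guarantees $m[l,q_l+1]+m[l,:q_l]/(q-q_l)\ge 0$. For rows $u>l$ the left part gets $+\Delta^l_v/(p-l)$ (with $\Delta^l_v=m_{l,v}$, which is monotone in $v$) and the right part gets a constant subtracted, so monotonicity in $v$ is preserved provided the value at $v=q_l$ does not overshoot the value at $v=q_l+1$; this reduces to the same inequality defining $q_l$, scaled by $1/(p-l)$, plus a telescoping argument like the one in the proof of Proposition~\ref{prop:def_q}. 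For~\ref{predicat_croissance_u} with $u\ge l+1$: within the lower-left block every row gets the same increment $\Delta^l_v/(p-l)$ and within the lower-right block every row gets the same decrement, so relative order among rows $\ge l+1$ is untouched; I also need $\tilde m_{l,v}\le \tilde m_{l+1,v}$ is \emph{not} required here since the claim is only for $u\ge l+1$. Invariant~\ref{predicat_marges+} is a direct mass-conservation check: the four-block update is exactly the transformation $T$ of Definition~\ref{def:pi_tilde} restricted to row $l$, and the computation in the proof of Proposition~\ref{prop:calcul_r} shows each row-sum and column-sum is preserved (the losses $\Delta^l$, $\Delta^l_v$ are redistributed to sum back to the same totals); positivity and monotonicity of the margins are unchanged because the margins themselves are unchanged. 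Invariant~\ref{predicat_m+}: rows $1,\dots,l-1$ are untouched (the $\tilde m$-updates only touch row $l$ and rows $>l$), and row $l$ is now non-negative because its upper-left part is zeroed and its upper-right part is $\ge 0$ by the defining property of $q_l$, as in the proof of Lemma~\ref{lem:eligibilité_pi_tilde}. Invariant~\ref{predicat_rectangle} with $l+1$: for $u\le l$ and $v\le q_{l+1}(m)\le q_l(m)$ we need $\tilde m_{u,v}=0$; for $u\le l-1$ this holds because $m_{u,v}=0$ by the input invariant~\ref{predicat_rectangle} and these entries are never modified, and for $u=l$ the upper-left block was explicitly zeroed and $q_{l+1}(m)\le q_l(m)$ by invariant~\ref{predicat_qdecroit}. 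Invariant~\ref{predicat_qdecroit} for $\tilde m$ from row $l+1$ onward follows from Proposition~\ref{prop:croissance_q} applied to the rows of $\tilde m$, once monotonicity (invariants~\ref{predicat_croissance_v},\ref{predicat_croissance_u}) is re-established; one should note $q_u(\tilde m)=q_u(m)$ is not claimed — rather $q_u(\tilde m)$ is recomputed and shown non-increasing. Invariant~\ref{predicat_r0}: $\tilde r$ is modified only on rows $1,\dots,l$ (the first $\tilde r$-update touches rows $1\le u\le l-1$, the second touches row $l$), so rows $l+1,\dots,p$ of $\tilde r$ are still zero.

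The two genuinely delicate invariants are~\ref{predicat_r+} (non-negativity of $\tilde r$) and~\ref{predicat_gradient} (the Lagrangian/KKT structure), and I expect~\ref{predicat_r+} to be the main obstacle. For~\ref{predicat_gradient}: the algorithm's $\tilde r$-formulas are reverse-engineered from Equation~\eqref{eq:gradient_lagrangien} so that $\tilde m_{u,v}=\lambda_u+\omega_v+\theta$ holds off the zero set and $\tilde r_{u,v}=-\lambda_u-\omega_v-\theta$ on it; I would exhibit the updated multipliers explicitly — $\theta$ shifts by the row-$l$ analogue of Equation~\eqref{eq:theta}, $\lambda_u$ for $u\le l-1$ and $u=l$ absorb the $\Delta^l_v$- and $\Delta^l$-corrections, $\omega_v$ for $v\le q_l$ absorb the column corrections — and then verify the complementary-slackness bookkeeping: the entries of $\tilde m$ that are newly zero (row $l$, columns $\le q_l$) are exactly where $\tilde r$ becomes possibly nonzero, matching the pattern of Proposition~\ref{prop:forme_pi_star}. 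This is essentially the content of Proposition~\ref{prop:calcul_r} and Remark~\ref{rem:r_uv} localized to one row, so it is bookkeeping rather than a new idea. For~\ref{predicat_r+}: I must show $\tilde r_{u,v}\ge 0$ for $u\le l$, $v\le q_l$. For $u\le l-1$, $\tilde r_{u,v}=r_{u,v}-\Delta^l_v/(p-l)-\Delta^l/((p-l)(q-q_l))$; since $r_{u,v}\ge 0$ by hypothesis, $\Delta^l=m[l,:q_l]<0$ (it is a sum of the negative prefix of row $l$ — strictly negative because we are in the else branch), and $\Delta^l_v=m_{l,v}$ which is $\le 0$ for $v\le q_l$ by monotonicity and the else-branch/definition of $q_l$ — wait, I need $m_{l,v}\le 0$ for $v\le q_l$, which requires care: $q_l$ is defined through $I_l$, not as the last non-positive entry, so I would use Proposition~\ref{prop:rectangle_contient_négatif}'s analogue to argue the negative entries of row $l$ lie within $1,\dots,q_l$, hence $\Delta^l_v\le 0$ there — both subtracted terms are then $\ge 0$, giving $\tilde r_{u,v}\ge 0$. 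For $u=l$, $\tilde r_{l,v}=-m_{l,v}-\Delta^l/(q-q_l)-\Delta^l_v/(p-l)-\Delta^l/((p-l)(q-q_l))$; here $-m_{l,v}=-\Delta^l_v\ge 0$, $-\Delta^l/(q-q_l)>0$, $-\Delta^l_v/(p-l)\ge 0$, $-\Delta^l/((p-l)(q-q_l))>0$, so every term is non-negative and the sum is $\ge 0$. The subtlety I anticipate is confirming that $p-l>0$ and $q-q_l>0$ throughout (so no division by zero and all signs are as claimed) — $q_l\le q-1$ by Proposition~\ref{prop:def_q}, and $l\le p-1$ because if $l=p$ the matrix's last row equals the largest margin row and, being the sum-completion of the positive earlier rows' complements, cannot be negative, so the else branch is never entered at $l=p$; I would state this boundary remark explicitly. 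With all nine invariants re-established, the lemma follows by the stated substitution $l\mapsto l+1$, $m\mapsto\tilde m$, $r\mapsto\tilde r$.
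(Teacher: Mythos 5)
Your overall plan — verify the nine invariants one by one, with the trivial branch dispatched immediately and the multipliers $\lambda,\omega,\theta$ updated explicitly for the stationarity invariant — is the same as the paper's, and most of the bookkeeping (marginal preservation, rows $<l$ untouched, $\tilde r$ zero from row $l+1$ on, the five-case check of invariant~\ref{predicat_gradient}) matches. But your argument for invariant~\ref{predicat_r+} has a genuine gap: it rests on the claim that $\Delta^l_v=m_{l,v}\le 0$ for every $v\le q_l(m)$, and this is false in general. The index $q_l(m)=\min I_l(m)$ is defined by an averaged test, so it typically lies strictly to the right of the last non-positive entry of row $l$: for instance a row $(-0.3,\,0.05,\,0.05,\,0.3)$ with $q=4$ gives $q_l=3$ while $m_{l,2}>0$. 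Your proposed repair via the analogue of Proposition~\ref{prop:rectangle_contient_négatif} goes in the wrong direction — it says the negative entries lie at columns $\le q_l$, not that all columns $\le q_l$ carry non-positive entries — so term-by-term positivity of the $\tilde r$-updates (both for $u\le l-1$ and for $u=l$) is not available. What is actually needed, and what the paper proves, is the combined inequality $\Delta^l_v+\frac{\Delta^l}{q-q_l}\le 0$ for all $v\le q_l$: it follows from the \emph{minimality} of $q_l$, i.e.\ $q_l-1\notin I_l(m)$ gives $m_{l,q_l}\le-\frac{\Delta^l}{q-q_l}$ (inequality~\eqref{eq:majoration_ql}), combined with monotonicity in $v$; the two $\tilde r$-updates are then $-\frac{1}{p-l}$ times, respectively $-\bigl(1+\frac{1}{p-l}\bigr)$ times, this non-positive quantity. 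The same inequality is also the missing ingredient in your junction step $\tilde m_{u,q_l}\le\tilde m_{u,q_l+1}$ for $u>l$ in invariant~\ref{predicat_croissance_v}: the membership inequality "defining $q_l$" is not enough there, you need the failure of the test at $q_l-1$.

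A second, smaller gap concerns invariant~\ref{predicat_rectangle} for the next call: you must show rows $1,\dots,l$ of $\tilde m$ vanish up to column $q_{l+1}(\tilde m)$, hence you need $q_{l+1}(\tilde m)\le q_l(m)$, but you invoke $q_{l+1}(m)\le q_l(m)$ and explicitly decline to relate $q_{l+1}(\tilde m)$ to $q_{l+1}(m)$, so the step is unjustified as written. The paper closes this by computing the membership test for $\tilde m$ on rows $u>l$ (it equals the test for $m$ plus $\frac{1}{p-l}$ times the row-$l$ test) and concluding $q_u(\tilde m)=q_u(m)$, which yields invariants~\ref{predicat_rectangle} and~\ref{predicat_qdecroit} simultaneously; your alternative route to invariant~\ref{predicat_qdecroit} via Proposition~\ref{prop:croissance_q} applied to the rows $u\ge l+1$ of $\tilde m$ is acceptable, but it does not provide the comparison with $q_l(m)$ that invariant~\ref{predicat_rectangle} requires. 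Your boundary remark ($q_l\le q-1$, $l\le p-1$) is a useful addition, but the two points above must be repaired for the proof to stand.
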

\begin{proof}
A proof is provided in Appendix~\ref{sec:proof_predicats}. 
\end{proof}

The predicates above ensure that Algorithm~\ref{alg:one_step_general} constructs a non-negative row $l$ of $\tilde{m}$ and updates the corresponding row of $\tilde{r}$ while preserving the assumptions required to process subsequent rows. Algorithm~\ref{alg:one_step_general} updates $(m,r)$ by enforcing non-negativity on row $l$ of $m$ and by constructing row $l$ of $r$, while maintaining the KKT structure up to row $l$. Algorithm~\ref{alg:all_steps_general} iterates this update starting from $r=0$ and $m=\pi^+$.

\begin{algorithm}
\caption{Iterating Algorithm~\ref{alg:one_step_general} starting from $\pi^+$}
\label{alg:all_steps_general}
\begin{algorithmic} 
\STATE $i\leftarrow 0$
\STATE $\tilde{\pi}^0 \leftarrow \pi^+$
\STATE $\tilde{r}^0 \leftarrow 0$
\WHILE{NOT $\tilde{\pi}^i \geq 0$} 
\STATE $\left(\tilde{\pi}^{i+1},\tilde{r}^{i+1}\right) \leftarrow \text{Algorithm~\ref{alg:one_step_general} applied to }(\tilde{\pi}^i,\tilde{r}^i,i+1)$
\STATE $i\leftarrow i+1$
\ENDWHILE
\RETURN $\tilde{\pi}^i$
\end{algorithmic}
\end{algorithm}

We can now state the main result: Algorithm~\ref{alg:all_steps_general} terminates in a bounded number of steps and returns the optimizer $\pi^*$ of Problem~\ref{pb:max_entropy}.

\begin{theorem}\label{th:general}
Algorithm~\ref{alg:all_steps_general} finishes in strictly less than $p$ iterations and its output is $\pi^*$.
\end{theorem}
\begin{proof}
A proof is provided in Appendix~\ref{sec:proof_general}.
\end{proof}

In summary, iterating a suitable generalization of the transformation $T$ (introduced to build the rectangle-solution extending $\pi^+$; see Subsection~\ref{ssec:exemple}) yields a constructive procedure that computes the exact optimizer of Problem~\ref{pb:max_entropy} in finite time for arbitrary marginals.

We now quantify how restrictive Condition~\eqref{cond:H} is, i.e., how often the closed-form coupling $\pi^+$ applies. This analysis shows that the general construction developed above is necessary in most instances.

\section{How often does the existing solution apply?\label{sec:margins}}
In this section, we quantify the restrictiveness of Inequality~\eqref{cond:H} on the admissible margins. 
We begin with a simple and illustrative case. 
To construct the coupling $\pi^+$ between a marginal $\mu$ and itself, the pair $(\mu,\mu)$ must satisfy Inequality~\eqref{cond:H}, which in this symmetric setting reduces to
\[
\mu_1 \ge \frac{1}{2p}.
\]
We estimate the probability that this condition holds when $\mu$ is drawn uniformly at random. To this end, we consider the uniform distribution over the simplex $S_p$ and compute the normalized Lebesgue measure of the subset satisfying the above constraint.

\begin{prop}\label{pp:H_log2}
The proportion of $\mu \in S_p$ such that the pair $(\mu,\mu)$ satisfies Inequality~\eqref{cond:H} is equal to
\[
\frac{1}{2^{p-1}}.
\]
\end{prop}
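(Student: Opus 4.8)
The plan is to first reduce Condition~\eqref{cond:H} for the symmetric pair $(\mu,\mu)$ to the single scalar inequality $\mu_1 \ge \frac{1}{2p}$ (as already noted in the text), and then to observe that, since $\mu_1 = \min_u \mu_u$ under the ordering~\eqref{eq:ordre_marges}, this is equivalent to requiring \emph{every} coordinate of $\mu$ to be at least $c := \frac{1}{2p}$. The quantity to compute is therefore the ratio of the $(p-1)$-dimensional Lebesgue measure of
\[
S_p^{(c)} := \left\{ x \in S_p \; : \; x_u \ge c \ \text{for all } 1 \le u \le p \right\}
\]
to that of $S_p$ itself.

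Next I would perform the affine change of variables $y = x - c\mathbf{1}$, where $\mathbf{1}=(1,\dots,1)$. This map sends the hyperplane $\{\sum_u x_u = 1\}$ onto $\{\sum_u y_u = 1 - pc\}$ and turns each constraint $x_u \ge c$ into $y_u \ge 0$; hence it carries $S_p^{(c)}$ bijectively onto $(1-pc)\,S_p$, a homothetic copy of the standard simplex with ratio $1-pc$. A translation preserves volume, and a homothety of ratio $\lambda$ acting inside a $(p-1)$-dimensional affine subspace multiplies $(p-1)$-dimensional volume by $\lambda^{p-1}$; therefore
\[
\frac{\mathrm{Vol}_{p-1}\!\left(S_p^{(c)}\right)}{\mathrm{Vol}_{p-1}(S_p)} = (1 - pc)^{p-1}.
\]
Substituting $c = \frac{1}{2p}$ gives $1-pc = \frac12$, hence the proportion equals $\left(\frac12\right)^{p-1} = \frac{1}{2^{p-1}}$.

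The computation has no genuine obstacle; the one point requiring care is the dimension bookkeeping, namely that $S_p$ is not full-dimensional in $\mathbb{R}^p$ but lies in a $(p-1)$-dimensional affine subspace, so the scaling exponent is $p-1$ rather than $p$. I would also remark in passing that $pc = \tfrac12 < 1$, which guarantees that $S_p^{(c)}$ is nonempty and that the formula is not vacuous; if one prefers an integral-based argument, the same factor $(1-pc)^{p-1}$ drops out of the standard Dirichlet-type volume formula for the simplex applied before and after the shift, giving an alternative route to the identical conclusion.
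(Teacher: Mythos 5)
Your proof is correct and follows essentially the same route as the paper: the paper also reduces the condition to $\mu_u \ge \tfrac{1}{2p}$ for all $u$, translates each coordinate by $\tfrac{1}{2p}$, and then rescales, which is exactly your translation-plus-homothety argument written out as an explicit iterated integral over the simplex parametrized by its first $p-1$ coordinates. Your geometric phrasing (the admissible set is $\tfrac{1}{2p}\mathbf{1} + \tfrac12 S_p$, hence has relative $(p-1)$-volume $(\tfrac12)^{p-1}$) is just a cleaner packaging of the same computation, with the dimension bookkeeping handled correctly.
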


\begin{proof}
Inequality~\eqref{cond:H} imposes lower bounds on the coordinates of $\mu$, which in turn restricts the domain of integration defining the uniform measure on $S_p$. 
The admissible set has Lebesgue measure
\begin{eqnarray*}
\int_{\frac{1}{2p}}^{1-\frac{p-1}{2p}}
\int_{\frac{1}{2p}}^{1-\frac{p-2}{2p}-x_1}
\cdots
\int_{\frac{1}{2p}}^{1-\frac{1}{2p}-\sum_{i=1}^{p-2} x_i}
\diff x_1 \cdots \diff x_{p-1}.
\end{eqnarray*}

With the successive changes of variables $x_i \leftarrow x_i + \frac{1}{2p}$, the above integral can be written as
\begin{eqnarray*}
&&\int_{0}^{\frac{1}{2}}
\int_{0}^{\frac{1}{2} -x_1}
\cdots
\int_{0}^{\frac{1}{2} - \sum_{i = 1}^{p-2} x_i}
\diff x_1 \cdots \diff x_{p-1} \\[1ex]
&=& \frac{1}{2^{p-1}}
\int_{0}^{1}
\int_{0}^{1 -y_1}
\cdots
\int_{0}^{1 - \sum_{i = 1}^{p-2} y_i}
\diff y_1 \cdots \diff y_{p-1}
= \frac{1}{2^{p-1}}.
\end{eqnarray*}
\end{proof}

\begin{rem}\label{req:constructionH}
The previous result is not surprising. A constructive procedure exists to build admissible marginals $\mu$. Indeed, by Inequality~\eqref{cond:H}, one has $\mu_u \ge \frac{1}{2p}$ for all $u$. It follows that $\mu$ can be written as
\[
\mu_u = \frac{1}{2p} + \frac{r_u}{2},
\]
where $r$ is an arbitrary probability distribution on $p$ elements chose in a "1/2-contracted" version of $S_p$.
\end{rem}

We now return to the general setting. Under the same assumptions as above, we draw $\mu$ and $\nu$ independently and uniformly over the simplices $S_p$ and $S_q$, respectively. The following proposition characterizes the admissible pairs of marginals for which $\pi^+$ is an eligible coupling.

\begin{prop}[Construction of eligible marginals]
\label{prop:alpha_discret}
The pair of marginals $(\mu,\nu)$ satisfies Inequality~\eqref{cond:H} if and only if there exists $\alpha \in [0,1]$ such that\\
\begin{minipage}{0.4\textwidth}
\begin{equation*}
\forall 1 \le u \le p, \quad \mu_u \ge \frac{\alpha}{p},
\end{equation*}
\end{minipage}
\hspace{4ex}
\begin{minipage}{0.4\textwidth}
\begin{equation*}
\forall 1 \le v \le q, \quad \nu_v \ge \frac{1-\alpha}{q}.
\end{equation*}
\end{minipage}
\end{prop}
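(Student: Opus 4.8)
The plan is to strip Condition~\eqref{cond:H} down to the single scalar inequality $p\mu_1 + q\nu_1 \ge 1$ (multiply through by $pq$), and then read the claimed statement as an elementary feasibility question for $\alpha$. The key simplification is that, under the ordering~\eqref{eq:ordre_marges}, $\mu_1$ and $\nu_1$ are the smallest coordinates of $\mu$ and $\nu$; hence the family of inequalities ``$\mu_u \ge \alpha/p$ for all $u$ and $\nu_v \ge (1-\alpha)/q$ for all $v$'' is equivalent to its two extreme instances $p\mu_1 \ge \alpha$ and $q\nu_1 \ge 1-\alpha$. So the whole proposition reduces to: $p\mu_1 + q\nu_1 \ge 1$ if and only if there exists $\alpha\in[0,1]$ with $1-q\nu_1 \le \alpha \le p\mu_1$.

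For the ``if'' direction I would simply add the two inequalities $p\mu_1 \ge \alpha$ and $q\nu_1 \ge 1-\alpha$ to get $p\mu_1 + q\nu_1 \ge 1$, i.e.\ Condition~\eqref{cond:H}; this needs nothing more than the restriction of the hypothesis to $u=v=1$. For the ``only if'' direction I would assume Condition~\eqref{cond:H}, rewrite it as $1-q\nu_1 \le p\mu_1$, and exhibit the explicit witness $\alpha := 1-q\nu_1$. One checks $\alpha\in[0,1]$ using the elementary bounds $0 \le \nu_1 \le 1/q$ (so $0 \le q\nu_1 \le 1$), which hold because $\nu_1$ is the minimum of $q$ nonnegative numbers summing to $1$. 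With this $\alpha$ one has $q\nu_1 = 1-\alpha$, hence $\nu_v \ge \nu_1 = (1-\alpha)/q$ for every $v$, while Condition~\eqref{cond:H} becomes exactly $p\mu_1 \ge \alpha$, giving $\mu_u \ge \mu_1 \ge \alpha/p$ for every $u$.

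The only place that requires a moment's care is the bookkeeping on the range of $\alpha$: one must verify that the natural candidate lies in $[0,1]$, which rests on $\mu_1 \le 1/p$ and $\nu_1 \le 1/q$ (equivalently $p\mu_1,\,q\nu_1 \in [0,1]$). Beyond that, there is no genuine obstacle — the statement is essentially a restatement of~\eqref{cond:H} as a one-dimensional linear-programming feasibility condition, and both implications are one-line computations once the reduction to the $u=v=1$ coordinates has been made.
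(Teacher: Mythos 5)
Your proof is correct and follows essentially the same route as the paper: both directions reduce to the extreme coordinates $u=v=1$, with the converse being a one-line addition of the two inequalities and the direct implication established by exhibiting an explicit feasible $\alpha$. The only cosmetic difference is your choice of witness $\alpha = 1 - q\nu_1$ (the other endpoint of the feasible interval) where the paper takes $\alpha = p\mu_1$; both verifications of $\alpha\in[0,1]$ rest on the same elementary bounds $\mu_1 \le 1/p$, $\nu_1 \le 1/q$.
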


\begin{proof}
Assume first that Inequality~\eqref{cond:H} holds and define $\alpha = p \mu_1 \in [0,1]$, where $\mu_1 = \min_u \mu_u$ as in Equation~\eqref{eq:ordre_marges}. Then, for all $1 \le v \le q$,
\[
\nu_v \ge \frac{1-\alpha}{q}.
\]
Conversely, if such an $\alpha$ exists, then for all $1 \le u \le p$ and $1 \le v \le q$,
\[
\frac{\mu_u}{q} + \frac{\nu_v}{p}
\ge \frac{\alpha}{pq} + \frac{1-\alpha}{pq}
= \frac{1}{pq},
\]
which is exactly Inequality~\eqref{cond:H}.
\end{proof}

\begin{rem}
Introducing the factor $p$ in the definition of $\alpha$ preserves the symmetry between $\mu$ and $\nu$, ensuring that all values $\alpha \in [0,1]$ are admissible independently of the dimensions $p$ and $q$.
\end{rem}
As a direct generalization of Remark~\ref{req:constructionH}, Proposition~\ref{prop:alpha_discret} implies the existence of probability distributions $r$ on $p$ elements and $s$ on $q$ elements such that\\

\begin{minipage}{0.4\textwidth}
\begin{equation}
\forall 1 \le u \le p, \quad
\mu_u = \frac{\alpha}{p} + (1-\alpha) r_u,
\label{eq:decompr}
\end{equation}
\end{minipage}
\hspace{4ex}
\begin{minipage}{0.4\textwidth}
\begin{equation}
\forall 1 \le v \le q, \quad
\nu_v = \frac{1-\alpha}{q} + \alpha s_v.
\label{eq:decomps}
\end{equation}
\end{minipage}

\begin{prop}[Constructive characterization of eligible marginals]\label{prop:propmunu}
A pair of probability laws $(\mu,\nu)\in S_p\times S_q$ satisfies Inequality~\eqref{cond:H} if and only if there exist a real $\alpha\in[0,1]$ and a pair of probability laws $(r,s)\in S_p\times S_q$ such that both Equations~\eqref{eq:decompr} and~\eqref{eq:decomps} hold.
\end{prop}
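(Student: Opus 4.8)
The plan is to prove the two implications separately, using Proposition~\ref{prop:alpha_discret} as the essential bridge, since it already translates Inequality~\eqref{cond:H} into the pair of coordinatewise lower bounds $\mu_u \ge \alpha/p$ and $\nu_v \ge (1-\alpha)/q$. The decompositions~\eqref{eq:decompr} and~\eqref{eq:decomps} merely assert that the ``excess'' of $\mu$ over the floor $\alpha/p$ (resp.\ of $\nu$ over $(1-\alpha)/q$), once renormalized, is itself a probability vector. So the whole statement reduces to a change of variables together with a check that total mass is preserved, and I expect no deep obstacle—only the need to treat carefully the two boundary values of $\alpha$.

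For the ``if'' direction, assume $\alpha\in[0,1]$ and $(r,s)\in S_p\times S_q$ satisfy~\eqref{eq:decompr} and~\eqref{eq:decomps}. Then for every $u$ one has $\mu_u = \alpha/p + (1-\alpha)r_u \ge \alpha/p$, since $1-\alpha\ge 0$ and $r_u\ge 0$; symmetrically $\nu_v \ge (1-\alpha)/q$. Applying the converse part of Proposition~\ref{prop:alpha_discret} (or directly combining the smallest of each bound) yields $\mu_1/q + \nu_1/p \ge \alpha/(pq) + (1-\alpha)/(pq) = 1/(pq)$, which is Inequality~\eqref{cond:H}. For completeness one also records that $(\mu,\nu)\in S_p\times S_q$: summing~\eqref{eq:decompr} over $u$ gives $\sum_u \mu_u = \alpha + (1-\alpha) = 1$, and likewise for $\nu$.

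For the ``only if'' direction, suppose Inequality~\eqref{cond:H} holds. Proposition~\ref{prop:alpha_discret} furnishes $\alpha\in[0,1]$ with $\mu_u\ge\alpha/p$ and $\nu_v\ge(1-\alpha)/q$ for all $u,v$. When $\alpha\notin\{0,1\}$, define $r_u := (\mu_u - \alpha/p)/(1-\alpha)$ and $s_v := (\nu_v - (1-\alpha)/q)/\alpha$; the lower bounds give $r_u\ge 0$ and $s_v\ge 0$, while $\sum_u r_u = (1-\alpha)/(1-\alpha) = 1$ and $\sum_v s_v = \alpha/\alpha = 1$, so $(r,s)\in S_p\times S_q$, and~\eqref{eq:decompr},~\eqref{eq:decomps} hold by construction. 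The degenerate cases are handled by hand: if $\alpha=1$, the bound $\mu_u\ge 1/p$ together with $\sum_u\mu_u=1$ forces $\mu$ to be uniform, so $\mu_u = \alpha/p + (1-\alpha)r_u$ holds for any $r\in S_p$, while $\nu_v = 0 + \alpha s_v$ forces $s=\nu\in S_q$; the case $\alpha=0$ is symmetric, with $r=\mu$ and $s$ arbitrary.

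The argument is essentially bookkeeping once Proposition~\ref{prop:alpha_discret} is available; the only point requiring genuine care is the division by $1-\alpha$ or by $\alpha$, i.e.\ the two boundary values of $\alpha$, which is why I would single them out explicitly rather than hide them inside a single generic formula. Beyond keeping simplex membership and normalization straight, there is no substantive difficulty.
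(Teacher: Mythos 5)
Your proof is correct and follows essentially the same route as the paper, which obtains the decompositions~\eqref{eq:decompr} and~\eqref{eq:decomps} directly from Proposition~\ref{prop:alpha_discret} as a generalization of Remark~\ref{req:constructionH}. Your explicit treatment of the boundary cases $\alpha\in\{0,1\}$ is a careful touch the paper leaves implicit, but it does not change the argument.
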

For a fixed value of $\alpha$, the set of admissible marginals $\mu$ appears as a $(1-\alpha)$-contraction of the simplex $S_p$, while the set of admissible $\nu$ is an $\alpha$-contraction of $S_q$. Since the two marginals are drawn independently, the measure of the admissible subset of $S_p\times S_q$ is therefore given by
\begin{equation}\label{eq:H_proportion}
\int_{0}^{1} \alpha^{p-1}(1-\alpha)^{q-1}\,\diff\alpha
= \frac{(p-1)!(q-1)!}{(p+q-2)!}.
\end{equation}
Moreover, these representations completely characterize the pairs of marginals satisfying Inequality~\eqref{cond:H}.

\begin{rem}[Different shapes]
We note that the expression of the admissible proportion depends on whether one considers the coupling of $\mu$ with itself or with an independent marginal $\nu$. Namely, the expression of Proposition~\ref{prop:propmunu} does not reduce to that of Equation~\ref{eq:H_proportion} by simply setting $p=q$. This discrepancy arises because independence between marginals holds only in the second setting.
\end{rem}

\paragraph{Scaling with dimension.}
Both expressions above show that the feasibility condition~\eqref{cond:H} becomes increasingly unlikely as the problem size grows. In the symmetric case $(\mu,\mu)$, Proposition~\ref{pp:H_log2} yields a proportion $2^{-(p-1)}$, which decays exponentially in $p$. In the independent case $(\mu,\nu)$, the proportion $\frac{(p-1)!(q-1)!}{(p+q-2)!}$ also decreases as either $p$ or $q$ increases, reflecting the fact that satisfying lower bounds simultaneously on all coordinates becomes harder in higher dimensions.

\section{Conclusion}
We studied the problem of maximizing Rényi entropy of order $2$ (equivalently, minimizing the index of coincidence) over the set of couplings with prescribed marginals.
Unlike the Shannon case---where independence is optimal---the order-2 criterion leads to a nontrivial optimizer whose explicit form was previously known only under the feasibility condition~\eqref{cond:H}.

Our main contribution is a constructive characterization of the optimal coupling for arbitrary marginals.
In particular, Theorem~\ref{th:general} establishes that the proposed algorithm terminates after at most $p-1$ updates and returns an exact minimizer, while preserving the prescribed marginals at every step.

We also revisited the closed-form solution $\pi^+$ available under~\eqref{cond:H}, providing a decomposition that yields an interpretable sampling procedure.
Finally, we quantified how restrictive~\eqref{cond:H} is and showed that its validity probability decreases rapidly with the dimension, which highlights the practical relevance of the general construction.

Several directions remain open, including numerical implementations for large-scale problems, extensions to more than two marginals, and continuous counterparts.

\bibliographystyle{ieeetr}
\bibliography{Biblio}

\begin{thebibliography}{10}

\bibitem{FRE51}
M.~Fréchet, ``Sur les tableaux de corrélations dont les marges sont
  données,'' {\em Annales de l’Université de Lyon, Section. A}, vol.~14,
  pp.~53--77, 1951.

\bibitem{strassen1965existence}
V.~Strassen, ``The existence of probability measures with given marginals,''
  {\em The Annals of Mathematical Statistics}, vol.~36, no.~2, pp.~423--439,
  1965.

\bibitem{renyi1961measures}
A.~R{\'e}nyi, ``On measures of entropy and information,'' in {\em Proceedings
  of the fourth Berkeley symposium on mathematical statistics and probability,
  volume 1: contributions to the theory of statistics}, vol.~4, pp.~547--562,
  University of California Press, 1961.

\bibitem{friedman1987index}
W.~F. Friedman, {\em The index of coincidence and its applications in
  cryptanalysis}, vol.~49.
\newblock Aegean Park Press California, 1987.

\bibitem{handbookcrypto}
P.~v.~O. A.J.~Menezes and S.~Vanstone, {\em Handbook of applied cryptography}.
\newblock CRC Press, 1997.

\bibitem{cicalese2019minimum}
F.~Cicalese, L.~Gargano, and U.~Vaccaro, ``Minimum-entropy couplings and their
  applications,'' {\em IEEE Transactions on Information Theory}, vol.~65,
  no.~6, pp.~3436--3451, 2019.

\bibitem{li2021efficient}
C.~T. Li, ``Efficient approximate minimum entropy coupling of multiple
  probability distributions,'' {\em IEEE Transactions on Information Theory},
  vol.~67, no.~8, pp.~5259--5268, 2021.

\bibitem{kocaoglu2020applications}
M.~Kocaoglu, S.~Shakkottai, A.~G. Dimakis, C.~Caramanis, and S.~Vishwanath,
  ``Applications of common entropy for causal inference,'' {\em Advances in
  neural information processing systems}, vol.~33, pp.~17514--17525, 2020.

\bibitem{kullback_contingency_1968}
C.~T. Ireland and S.~Kullback, ``Contingency tables with given marginals,''
  {\em Biometrika}, vol.~55, no.~1, pp.~179--188, 1968.

\bibitem{pougaza2010maximum}
D.-B. Pougaza and A.~Mohammad-Djafari, ``Maximum entropies copulas,'' in {\em
  30th International Workshop on Bayesian Inference and Maximum Entropy Methods
  in Science and Engineering}, vol.~1305, pp.~329--336, 2010.

\bibitem{PBThese}
P.~Bertrand, {\em Conditions de Monge, Transport Optimal et Pont Relationnel:
  propri{\'e}t{\'e}s, applications et extension du couplage
  d'ind{\'e}termination}.
\newblock PhD thesis, Sorbonne Universit{\'e}, 2021.
\newblock Thèse de doctorat dirigée par Broniatowski, Michel et
  Marcotorchino, Jean-François Mathématiques Sorbonne université.

\bibitem{MAR84}
J.-F. Marcotorchino, ``Utilisation des comparaisons par paires en statistique
  des contingences,'' {\em Publication du Centre Scientifique IBM de Paris et
  Cahiers du Séminaire Analyse des Données et Processus Stochastiques
  Université Libre de Bruxelles}, pp.~1--57, 1984.

\bibitem{MAR86}
J.-F. Marcotorchino, ``Maximal association theory as a tool of research,'' {\em
  Classification as a tool of research , W.Gaul and M. Schader editors, North
  Holland Amsterdam}, 1986.

\bibitem{JV77}
S.~Janson and J.~Vegelius, ``Correlation coefficients for nominal scales,''
  {\em Uppsala: Department of Statistics}, 1977.

\bibitem{harremoes2001inequalities}
P.~Harremo{\"e}s and F.~Topsoe, ``Inequalities between entropy and index of
  coincidence derived from information diagrams,'' {\em IEEE Transactions on
  Information Theory}, vol.~47, no.~7, pp.~2944--2960, 2001.

\end{thebibliography}

\appendix
\section{Proof of lemma~\ref{prop:forme_pi_star}\label{sec:proof_forme_pi_star}}
\begin{proof}
To identify a critical point $\pi^*$ of the Lagrangian, we set the gradient with respect to $\pi_{u,v}$ equal to zero for all $(u,v)$:
\begin{equation}\label{eq:gradient_lagrangien}
\pi^*_{u,v} = \lambda_u + \omega_v + \theta + r_{u,v};
\end{equation}
the complementary slackness conditions imply that, for all $(u,v)$,
\begin{equation*}
\pi^*_{u,v} \, r_{u,v} = 0;
\end{equation*}
and we know $r\geq 0$.
Summing Equation~\ref{eq:gradient_lagrangien} respectively over $u$ and $v$, using self-explanatory notation ($\Lambda := \sum_{u=1}^p \lambda_u$, $\Omega := \sum_{v=1}^q \omega_v$), yields:
\begin{equation}\label{eq:gradient_lagrangien_sum}
\left\{
    \begin{array}{ll}
        \forall u, & \mu_u = q \lambda_u + \Omega + q \theta + R_{u,\cdot},\\
		\forall v, & \nu_v = \Lambda + p \omega_v + p \theta + R_{\cdot,v}.
    \end{array}
\right.
\end{equation}
Summing once more, both equations lead to:
\begin{equation}\label{eq:theta}
 \theta = \frac{1}{pq}
 - \frac{\Lambda}{p}
 - \frac{\Omega}{q}
 - \frac{R}{pq}.
\end{equation}
Rewriting Equation~\eqref{eq:gradient_lagrangien_sum}, we obtain:
\begin{equation*}
\left\{
    \begin{array}{ll}
        \forall u, & \lambda_u
        = \dfrac{\mu_u - \Omega - q\theta - R_{u,\cdot}}{q},\\
		\forall v, & \omega_v
        = \dfrac{\nu_v - \Lambda - p\theta - R_{\cdot,v}}{p}.
    \end{array}
\right.
\end{equation*}
It follows that:
\begin{equation*}
\pi^*_{u,v}
= \frac{\mu_u - R_{u,\cdot}}{q}
+ \frac{\nu_v - R_{\cdot,v}}{p}
- \frac{\Omega}{q}
- \frac{\Lambda}{p}
- \theta
+ r_{u,v}.
\end{equation*}
Using Equation~\eqref{eq:theta}, we finally obtain:
\begin{equation*}
\pi^*_{u,v}
= \pi^+_{u,v}
- \frac{R_{u,\cdot}}{q}
- \frac{R_{\cdot,v}}{p}
+ \frac{R}{pq}.
\end{equation*}
As anticipated, we recover the form of $\pi^+$ with the addition of a correction term involving $r$. Using the complementary slackness condition $r\pi^* = 0$ terminates the proof.
\end{proof}

\section{Proof of Proposition~\ref{prop:croissance_pi_star}\label{sec:proof_croissance_pi_star}}
\begin{proof}
We argue by contradiction and prove the property for fixed $u$; the case of fixed $v$ follows by symmetry. Suppose that there exist indices $1 \le u_0 < p$ and $1 \le v_0 \le q$ such that
\[
\pi^*_{u_0,v_0} > \pi^*_{u_0+1,v_0}.
\]

Since $\mu_{u_0} = \pi_{u_0,\cdot} \le \mu_{u_0+1} = \pi_{u_0+1,\cdot}$ and $\pi \ge 0$, compensation must occur elsewhere, implying the existence of some $v_0'$ such that
\[
\pi^*_{u_0,v_0'} < \pi^*_{u_0+1,v_0'}.
\]
Define
\[
\epsilon
= \frac{1}{2}
\min\!\left(
\pi^*_{u_0,v_0} - \pi^*_{u_0+1,v_0},
\;
\pi^*_{u_0+1,v_0'} - \pi^*_{u_0,v_0'}
\right)
> 0.
\]
We then perturb $\pi^*$ by setting $\pi^{**}$ equal to $\pi^*$ everywhere except at four entries:
\begin{eqnarray*}
\pi^{**}_{u_0,v_0} &=& \pi^*_{u_0,v_0} - \epsilon,\\
\pi^{**}_{u_0,v_0'} &=& \pi^*_{u_0,v_0'} + \epsilon,\\
\pi^{**}_{u_0+1,v_0} &=& \pi^*_{u_0+1,v_0} + \epsilon,\\
\pi^{**}_{u_0+1,v_0'} &=& \pi^*_{u_0+1,v_0'} - \epsilon.
\end{eqnarray*}
By construction, marginals constraints of Problem~\ref{pb:chi2_constraints} are satisfied by $\pi^{**}$ since they are satisfied by $\pi^*$. Moreover $\epsilon$ is chosen small enough to have $\pi^{**}\geq 0$. Let's examine the objective function:
\begin{eqnarray*}
\sum_{u,v} (\pi^{**}_{u,v})^2 - \sum_{u,v} (\pi^*_{u,v})^2
&=& \left(\pi^*_{u_0,v_0} - \epsilon\right)^2 - (\pi^*_{u_0,v_0})^2
+ \left(\pi^*_{u_0,v_0'} + \epsilon\right)^2 - (\pi^*_{u_0,v_0'})^2 \\
&+& \left(\pi^*_{u_0+1,v_0} + \epsilon\right)^2 - (\pi^*_{u_0+1,v_0})^2
+ \left(\pi^*_{u_0+1,v_0'} - \epsilon\right)^2 - (\pi^*_{u_0+1,v_0'})^2 \\
&=& 4\epsilon^2
+ 2\epsilon\!\left(\pi^*_{u_0+1,v_0} - \pi^*_{u_0,v_0}\right)
+ 2\epsilon\!\left(\pi^*_{u_0,v_0'} - \pi^*_{u_0+1,v_0'}\right) \\
&\le& -4\epsilon^2 < 0.
\end{eqnarray*}

We have thus constructed a feasible solution ($\pi^{**}\in S_{\mu,\nu}$) with strictly smaller objective value than $\pi^*$, contradicting the optimality of $\pi^*$. This concludes the proof.
\end{proof}

\section{Proof of Proposition~\ref{prop:calcul_r}\label{sec:proof_calcul_r}}
\begin{proof}
We first prove Equation~\eqref{eq:R} by summing over all non-zero entries of $\pi^*$:
\begin{eqnarray*}
1
&=& \sum_{u=1}^p \sum_{v=q_1+1}^q \pi^*_{u,v}
   + \sum_{u=p_1+1}^p \sum_{v=1}^{q_1} \pi^*_{u,v} \\
&=& \sum_{u=1}^p \sum_{v=q_1+1}^q
\left(\pi^+_{u,v} - \frac{R_{u,\cdot}}{q} - \frac{R_{\cdot,v}}{p} + \frac{R}{pq}\right)  
+ \sum_{u=p_1+1}^p \sum_{v=1}^{q_1}
\left(\pi^+_{u,v} - \frac{R_{u,\cdot}}{q} - \frac{R_{\cdot,v}}{p} + \frac{R}{pq}\right).
\end{eqnarray*}

Recalling that $r$ vanishes whenever $\pi^*$ is non-zero (in particular, $R_{u,\cdot}=0$ for $u\ge p_1+1$), and using Definition~\ref{def:perte_masse}, we obtain:
\begin{equation*}
\Delta = -(q-q_1)\frac{R}{q} -R + (q-q_1)\frac{R}{q} - 0 -(p-p_1)\frac{R}{p}+q_1(p-p_1)\frac{R}{pq}=-\frac{(q-q_1)(p-p_1)}{pq}R.
\end{equation*}

We now prove Equation~\eqref{eq:Ru} by summing over a row $u\le p_1$ (otherwise $r$ is zero):
\begin{eqnarray*}
\mu_u
&=& \sum_{v=q_1+1}^q \pi^*_{u,v}
= \sum_{v=q_1+1}^q
\left(\pi^+_{u,v} - \frac{R_{u,\cdot}}{q} - \frac{R_{\cdot,v}}{p} + \frac{R}{pq}\right) \\
&=& \sum_{v=q_1+1}^q \pi^+_{u,v}
 - \frac{(q-q_1)R_{u,\cdot}}{q}
 + \frac{(q-q_1)R}{pq}.
\end{eqnarray*}

Using Definition~\ref{def:perte_masse} and Equation~\eqref{eq:R}, we obtain:
\begin{equation*}
\Delta_{u,\cdot}
= -\frac{(q-q_1)R_{u,\cdot}}{q}
  - \frac{\Delta}{p-p_1}.
\end{equation*}

Finally, for $u\ge p_1+1$, the KKT conditions on $r$ imply $R_{u,\cdot}=0$, which justifies the indicator function $\mathds{1}_{u\le p_1}$.  
Equation~\eqref{eq:Rv} is obtained symmetrically, which concludes the proof.
\end{proof}

\section{Proof of Lemma~\ref{lem:predicats}\label{sec:proof_predicats}}
\begin{proof}
We establish each invariant separately.

\paragraph*{Invariant~\ref{predicat_croissance_v}: monotonicity of $\tilde{m}$ in $v$}
Fix $u<l$. Then $v \mapsto \tilde{m}_{u,v} = m_{u,v}$ is non-decreasing by assumption~\ref{predicat_croissance_v}.

When $u=l$, the map $v \mapsto \tilde{m}_{l,v}$ is constant and equal to zero for $v \le q_l$, hence non-decreasing. For $v \ge q_l+1$, we have
\[
\tilde{m}_{l,v} = m_{l,v} + \frac{\Delta^l}{q-q_l},
\]
which is non-decreasing in $v$ since $m$ is by assumption~\ref{predicat_croissance_v}, and since only a constant is added. Moreover, $\tilde{m}_{l,q_l+1}$ is positive by construction, which ensures continuity at the junction: $ \tilde{m}_{l,q_l+1} \ge 0 = \tilde{m}_{l,q_l}$.

When $u>l$, the form of $\tilde{m}_{u,v}$ depends on the position of $v$ relative to $q_l$.  
If $v \le q_l$, $m_{u,v}$ is non-decreasing in $v$ by assumption~\ref{predicat_croissance_v}, and we add $\frac{m_{l,v}}{p-l}$, which is also non-decreasing in $v$ using the same assumption.  
Similarly, if $v \ge q_l+1$, we add the constant $\frac{\Delta^l}{(p-l)(q-q_l)}$ to a non-decreasing function.  
It remains to show that $\tilde{m}_{u,q_l} \le \tilde{m}_{u,q_l+1}$.
Since $q_l-1 \notin I_l(m)$, we have:
\begin{eqnarray}
&& m_{l,q_l} \le - \frac{m[l,:q_l-1]}{q-q_l+1}\nonumber\\
&\Leftrightarrow & m_{l,q_l}(q-q_l+1) \le -(\Delta^l - m_{l,q_l})\nonumber\\
&\Leftrightarrow & m_{l,q_l}(q-q_l) \le -\Delta^l\nonumber\\
&\Leftrightarrow & \Delta^l_{q_l} \le - \frac{\Delta^l}{q-q_l}.
\label{eq:majoration_ql}
\end{eqnarray}
Using once again assumption~\ref{predicat_croissance_v}, this inequality directly implies
\[
\tilde{m}_{u,q_l} \le \tilde{m}_{u,q_l+1}.
\]

\paragraph{Invariant~\ref{predicat_croissance_u}: monotonicity of $\tilde{m}$ in $u$ for $u \ge l+1$}
When $u \ge l+1$, the difference between $\tilde{m}$ and $m$ consists in adding a constant which only depends on the column $v$. Since $m$ was non-decreasing in $u$ for $u \ge l$ by assumption~\ref{predicat_croissance_u}, it follows that $\tilde{m}$ is non-decreasing for $u \ge l+1$.

\begin{rem}
Monotonicity between rows $l$ and $l+1$ is not guaranteed, since for $v \le q_l$ one has $\tilde{m}_{l,v}=0$ and it may occur that $\tilde{m}_{l+1,v} < 0$.
\end{rem}

\paragraph*{Invariant~\ref{predicat_marges+}: preservation of the marginals by $\tilde{m}$}
We show that $\tilde{m}$ and $m$ have the same row and column marginals. The rows $u<l$ of $m$ and $\tilde{m}$ coincide.

For row $l$:
\begin{eqnarray*}
\tilde{m}_{l,\cdot}
&=& \sum_{v=q_l+1}^q \tilde{m}_{l,v}
= \sum_{v=q_l+1}^q \left(m_{l,v} + \frac{\Delta^l}{q-q_l}\right) \\
&=& \sum_{v=q_l+1}^q m_{l,v} + \Delta^l
= m_{l,\cdot}.
\end{eqnarray*}

For rows $u>l$:
\begin{eqnarray*}
\tilde{m}_{u,\cdot}
&=& \sum_{v=1}^q \tilde{m}_{u,v} 
= \sum_{v=1}^{q_l}\left(m_{u,v} + \frac{\Delta^l_v}{p-l}\right)
+ \sum_{v=q_l+1}^q\left(m_{u,v} - \frac{\Delta^l}{(p-l)(q-q_l)}\right)\\
&=& \sum_{v=1}^q m_{u,v}
+ \frac{\Delta^l}{p-l}
- \frac{\Delta^l}{p-l}
= m_{u,\cdot}.
\end{eqnarray*}

For columns $v \le q_l$:
\begin{eqnarray*}
\tilde{m}_{\cdot,v}
&=& \sum_{u=l+1}^p \tilde{m}_{u,v}
= \sum_{u=l+1}^p\left(m_{u,v} + \frac{\Delta^l_v}{p-l}\right)\\
&=& \sum_{u=l+1}^p m_{u,v} + \Delta^l_v
= \sum_{u=l}^p m_{u,v}
= m_{\cdot,v},
\end{eqnarray*}
where the last equality follows from Assumption~\ref{predicat_rectangle}.

For columns $v \ge q_l$:
\begin{eqnarray*}
\tilde{m}_{\cdot,v}
&=& \sum_{u=1}^p \tilde{m}_{u,v}
= \tilde{m}[1:l-1,v] + \tilde{m}_{l,v} + \tilde{m}[l+1:p,v]\\
&=& m[1:l-1,v] + m[l,v] + \frac{\Delta^l}{q-q_l}
+ m[l+1:p,v] - (p-l)\frac{\Delta^l}{(p-l)(q-q_l)}\\
&=& m_{\cdot,v}.
\end{eqnarray*}

\paragraph*{Invariant~\ref{predicat_m+}: non-negativity of $\tilde{m}$ up to row $l$}
The rows $u<l$ of $\tilde{m}$ coincide with those of $m$ and are therefore non-negative by assumption~\ref{predicat_m+}.  
For row $l$, the function $v \mapsto \tilde{m}_{l,v}$ is zero up to $q_l$, and $\tilde{m}_{l,q_l+1}$ is positive by construction.  

\paragraph*{Invariant~\ref{predicat_rectangle}: zero rectangle below $q_{l+1}(\tilde{m})$}
To prove the predicat is maintained, we show that $q_u(m)=q_u(\tilde{m})$ for all $u>l$.  
First, the monotonicity of $\tilde{m}$ in $v$ ensures the existence of $q_u(\tilde{m})$ for all $u$.  

For $u>l$, if $v'\in I_u(m)$ then necessarily $v'\le q_l$, since the sequence $q_u(m)$ is non-increasing by Assumption~\ref{predicat_qdecroit}. We therefore know the explicit form of $\tilde{m}$ and obtain:
\begin{eqnarray*}
&&(q-v')\tilde{m}_{u,v'} + \tilde{m}[u,:v'-1] \\
&=& (q-v')\left(m_{u,v'} + \frac{m_{l,v'}}{p-l}\right)
+ m[u,:v'-1] + \frac{m[l,:v'-1]}{p-l}\\
&=& (q-v')m_{u,v'} + m[u,:v'-1]
+ \frac{1}{p-l}\left((q-v')m_{l,v'} + m[l,:v'-1]\right).
\end{eqnarray*}
This expression is non-negative when $v'=q_u(m)$ and negative when $v'<q_u(m)$, which implies
\[
q_u(\tilde{m}) = q_u(m).
\]
In particular,
\[
q_{l+1}(\tilde{m}) = q_{l+1}(m) \le q_l(m)
\]
by Assumption~\ref{predicat_qdecroit}. Hence, using Assumption~\ref{predicat_rectangle},
\[
\forall\, 1\le u\le l-1,\; 1\le v\le q_{l+1}(\tilde{m}),\quad \tilde{m}_{u,v}=m_{u,v}=0.
\]
It remains to verify the claim for row $l$. Since
\[
q_{l+1}(\tilde{m}) = q_{l+1}(m) \le q_l(m),
\]
and since row $l$ is identically zero up to $q_l(m)$ by construction, we indeed have
\[
\tilde{m}_{l,v}=0 \quad \text{for all } v\le q_{l+1}(\tilde{m}).
\]

\paragraph*{Invariant~\ref{predicat_qdecroit}: monotonicity of $q_u(\tilde{m})$ from row $l+1$ onward}
The previous paragraph shows that $q_u(m)=q_u(\tilde{m})$ for all $u>l$. The result follows directly from Assumption~\ref{predicat_qdecroit}.

\paragraph*{Invariant~\ref{predicat_r0}: $\tilde{r}$ null after row $l+1$}
By Assumption~\ref{predicat_r0}, $r$ is null after row $l$. Since the algorithm only modifies rows with index at most $l$, the same property holds for $\tilde{r}$ after row $l+1$

\paragraph*{Invariant~\ref{predicat_r+}: positivity of $\tilde{r}$}
By monotonicity of $m$ in $v$, for all $v\le q_l$ we have
\[
\Delta^l_v = m_{l,v} \le m_{l,q_l} = \Delta^l_{q_l}.
\]
Using inequality~\eqref{eq:majoration_ql}, this implies
\[
\Delta^l_v + \frac{\Delta^l}{q-q_l} \le 0.
\]
For $v\le q_l$, passing from $r$ (positive by Assumption~\ref{predicat_r+}) to $\tilde{r}$ consists in adding either
\[
-\left(\Delta^l_v + \frac{\Delta^l}{q-q_l}\right)
\quad \text{or} \quad
-\frac{1}{p-l}\left(\Delta^l_v + \frac{\Delta^l}{q-q_l}\right),
\]
both of which are non-negative.  
For $v>q_l$ and arbitrary $u$, we have $\tilde{r}_{u,v}=r_{u,v}$. Hence $\tilde{r}$ remains non-negative.

\paragraph*{Invariant~\ref{predicat_gradient}: preservation of the Lagrangian stationarity structure}
We set by default $\tilde{\lambda}=\lambda$, $\tilde{\omega}=\omega$, and $\tilde{\theta}=\theta$, and modify:
\begin{itemize}
\item $\forall\, u\le l-1,\quad \tilde{\lambda}_u = \lambda_u + \frac{\Delta^l}{(p-l)(q-q_l)}$,
\item $\tilde{\lambda}_l = \lambda_l + \frac{\Delta^l}{q-q_l} + \frac{\Delta^l}{(p-l)(q-q_l)}$,
\item $\forall\, v\le q_l,\quad \tilde{\omega}_v = \omega_v + \frac{\Delta^l_v}{p-l} + \frac{\Delta^l}{(p-l)(q-q_l)}$,
\item $\tilde{\theta} = \theta - \frac{\Delta^l}{(p-l)(q-q_l)}$.
\end{itemize}
We then verify that the resulting forms of $\tilde{r}$ and $\tilde{m}$ match the KKT stationarity conditions.

\subparagraph*{Case $1\le u\le l-1,\; 1\le v\le q_l$}
Whenever $\tilde{r}$ is non-zero:
\begin{eqnarray*}
\tilde{r}_{u,v}
&=& r_{u,v} - \frac{\Delta^l_v}{p-l} - \frac{\Delta^l}{(p-l)(q-q_l)} \\
&=& -\lambda_u - \omega_v - \frac{\Delta^l_v}{p-l}
- \frac{\Delta^l}{(p-l)(q-q_l)} - \theta\\
&=& -\tilde{\lambda}_u - \tilde{\omega}_v - \tilde{\theta}.
\end{eqnarray*}
In this region, $\tilde{m}$ is identically zero by assumption.

\subparagraph*{Case $u=l,\; 1\le v\le q_l$}
In this region, $\tilde{m}$ is identically zero by construction. Whenever $\tilde{r}$ is non-zero:
\begin{eqnarray*}
\tilde{r}_{l,v}
&=& - m_{l,v} - \frac{\Delta^l_v}{p-l}
- \frac{\Delta^l}{q-q_l}
- \frac{\Delta^l}{(p-l)(q-q_l)}\\
&=& -\lambda_l - \omega_v - \theta
- \frac{\Delta^l}{q-q_l}
- \frac{\Delta^l_v}{p-l}
- \frac{\Delta^l}{(p-l)(q-q_l)}\\
&=& -\tilde{\lambda}_l - \tilde{\omega}_v - \tilde{\theta}.
\end{eqnarray*}

\subparagraph*{Case $u=l,\; q_l+1\le v\le q$}
In this region, $\tilde{r}$ is identically zero by construction, since $r$ already was null by Assumption~\ref{predicat_r0} and is left unchanged. Whenever $\tilde{m}$ is non-zero:
\begin{eqnarray*}
\tilde{m}_{l,v}
&=& m_{l,v} + \frac{\Delta^l}{q-q_l}
= \lambda_l + \omega_v + \theta + \frac{\Delta^l}{q-q_l}\\
&=& \tilde{\lambda}_l + \tilde{\omega}_v + \tilde{\theta}.
\end{eqnarray*}

\subparagraph*{Case $u>l,\; 1\le v \le q_l$}
In this region, $\tilde{r}$ is null and
\[
\tilde{m}_{u,v}
= m_{u,v} + \frac{\Delta^l_v}{p-l}
= \lambda_u + \omega_v + \frac{\Delta^l_v}{p-l} + \theta
= \tilde{\lambda}_u + \omega_v + \frac{\Delta^l_v}{p-l} + \theta
= \tilde{\lambda}_u + \tilde{\omega}_v + \tilde{\theta}.
\]

\subparagraph*{Case $u>l,\; q_l+1\le v \le q$}
In this region, $\tilde{r}$ also is null and
\[
\tilde{m}_{u,v}
= m_{u,v} - \frac{\Delta^l}{(p-l)(q-q_l)}
= \tilde{\lambda}_u + \tilde{\omega}_v + \theta
- \frac{\Delta^l}{(p-l)(q-q_l)}
= \tilde{\lambda}_u + \tilde{\omega}_v + \tilde{\theta}.
\]

Finally, up to and including row $l$, the pair $(\tilde{r},\tilde{m})$ simultaneously satisfies complementary slackness and the stationarity condition~\eqref{eq:gradient_lagrangien}. 
\end{proof}

\section{Proof of Theorem~\ref{th:general}\label{sec:proof_general}}
\begin{proof}
We demonstrate i)initialization verifies hypothesis of Lemma~\ref{lem:predicats}, ii)algorithm~\ref{alg:all_steps_general} terminates and iii) output is $\pi^*$.
\paragraph*{Initialisation}
We first verify that, at $i=0$, the assumptions of Lemma~\ref{lem:predicats} are satisfied.
Assumptions~\ref{predicat_croissance_v} and~\ref{predicat_croissance_u} hold because $\pi^+$ is monotone in the sense of Definition~\ref{def:monotone}.
Assumption~\ref{predicat_marges+} follows from the ordering of the marginals in~\eqref{eq:ordre_marges}.
At this stage, Assumptions~\ref{predicat_m+} and~\ref{predicat_rectangle} are vacuous, and Assumption~\ref{predicat_qdecroit} follows from Lemma~\ref{lem:croissance_q}.
Finally, Assumptions~\ref{predicat_r0} and~\ref{predicat_r+} hold since $r$ is identically zero.
Defining
\[
\lambda_u = \frac{\mu_u}{q}, \qquad
\omega_v = \frac{\nu_v}{q}, \qquad
\theta = \frac{1}{pq}
\]
verifies Assumption~\ref{predicat_gradient}.

\paragraph*{Algorithm terminates}
If Algorithm~\ref{alg:one_step_general} returns $m$ at some iteration $i\le p-2$, then row $l=i+1$ is already non-negative; by Assumption~\ref{predicat_croissance_u} this implies $\tilde{m}=m\ge 0$.

Otherwise, the loop reaches $i=p-2$ (i.e., $l=p-1$) and an update is performed. Row $p-1$ then contains a negative entry; by Assumption~\ref{predicat_croissance_v} we have $m_{p-1,1}<0$, hence $q_l(m)\ge 2$. Using Assumptions~\ref{predicat_marges+} and~\ref{predicat_rectangle} for $\tilde{m}$, we obtain
\[
\tilde{m}_{p,1}=\nu_1>0,
\]
so Assumption~\ref{predicat_croissance_v} yields that the whole last row of $\tilde{m}$ is positive. Moreover, Assumption~\ref{predicat_m+} (Lemma~\ref{lem:predicats}) ensures that the first $p-1$ rows of $\tilde{m}$ are non-negative. Therefore, Algorithm~\ref{alg:one_step_general} terminates at some iteration $i\le p-1$ and returns $\tilde{m}\ge 0$.

\paragraph*{Algorithm output is solution of KKT}
Let $n$ denote the iteration index at which the loop terminates and set $\tilde{\pi}=\tilde{\pi}^n$.
By Assumption~\ref{predicat_marges+}, the matrix $\tilde{\pi}$ satisfies the prescribed marginals, and by construction it is non-negative.
Assumption~\ref{predicat_gradient} ensures that $(r,\tilde{\pi})$ satisfies stationarity and complementary slackness for the KKT conditions, while $r\ge 0$ follows from Assumption~\ref{predicat_r+}.
We may therefore conclude, by Proposition~\ref{prop:kkt}, that
\[
\pi^* = \tilde{\pi}.
\]
\end{proof}

\end{document}